\documentclass[11pt,a4paper]{article}
\usepackage{lineno}
\usepackage{longtable}
\usepackage[utf8]{inputenc}
\usepackage{amsmath}
\usepackage{amsthm}
\usepackage{algorithm}
\usepackage{algorithmic}
\usepackage[bitstream-charter]{mathdesign}
\usepackage{siunitx}
\usepackage{adjustbox}
\newtheorem{theorem}{Theorem}
\usepackage{xcolor}
\usepackage{graphicx}
\usepackage[margin=1in]{geometry}
\usepackage{float}
\usepackage{verbatimbox}
\usepackage{booktabs}
\usepackage{authblk}
\usepackage{lipsum}
\newcommand{\dis }{\displaystyle}
\usepackage{caption} 
\usepackage{subcaption} 
\usepackage[T1]{fontenc}
\theoremstyle{definition}

\renewcommand{\arraystretch}{1.5}

\makeatletter
\def\thm@space@setup{\thm@preskip=1.2\parskip \thm@postskip=0pt}
\makeatother
\usepackage{tikz}
\usetikzlibrary{arrows}
\usepackage[toc,page]{appendix}
\usepackage{xurl} 
\usepackage[authoryear, round]{natbib}

\usepackage{hyperref}\hypersetup{
	colorlinks=true,
	citecolor=blue,
	linkcolor=red,
	filecolor=blue,
	urlcolor=blue,
}

\begin{document}
	\pagenumbering{gobble} 
	
%
%

%
	
	\newpage
	
	\clearpage 
	\pagenumbering{arabic} 
	\setcounter{page}{1} 

	\title{A temperature-driven diffusion model of Usutu virus spread in Germany with spillover into neighbouring countries}

\author{
	Pride Duve$^{a}$\thanks{Corresponding author: \texttt{pride.duve@bnitm.de}},
	\;D\'aniel Cadar$^{a}$,
	\;Norbert Becker$^{b,c}$,
	\;Jonas Schmidt-Chanasit$^{a,d}$,
	\;Felix Gregor Sauer$^{a}$,
	\;and Renke L\"uhken$^{a}$\\
	
	\small $^{a}$\textit{Bernhard Nocht Institute for Tropical Medicine, Bernhard-Nocht-Straße 74, 20359 Hamburg, Germany}\\
	\small $^{b}$\textit{Institute for Dipterology, Georg-Peter-S\"u\ss-Stra\ss e 3, 67346 Speyer, Germany}\\
	\small $^{c}$\textit{University of Heidelberg, Grabengasse 1, 69117 Heidelberg, Germany}\\
	\small $^{d}$\textit{University of Hamburg, Faculty of Mathematics, Informatics and Natural Sciences, Mittelweg 177, 20148 Hamburg, Germany}
}

	\date{}
	\maketitle

	\begin{abstract}
		
Usutu virus (USUV) is a flavivirus of the Japanese encephalitis complex transmitted between \textit{Culex} mosquitoes and birds, a transmission pattern similar to that of the West Nile virus (WNV). In Germany, the first case of USUV was detected in 2010 in mosquitoes collected in the town of Weinheim, and by 2018 the virus had spread to almost the entire country. Interestingly, the infection front exhibited a clockwise rotational spread pattern throughout the years, a pattern completely different from that of the WNV. This clockwise progression corresponded closely with the spatial temperature gradient, suggesting that warmer regions probably facilitated faster viral amplification and onward transmission. Understanding the drivers that influence the spreading patterns of arboviruses is important as it guides surveillance and implementation of control strategies. In this study, we develop a reaction-diffusion partial differential equation (PDE) model to investigate the spatial spread of USUV in Germany within an extended domain that includes some neighbouring countries (Belgium, the Netherlands, and Luxembourg), thereby capturing cross-border transmission processes. Mosquito parameters, i.e., extrinsic incubation rate, mortality and biting rates, are temperature-driven, as temperature plays an important role in the activity of mosquitoes. Our model qualitatively reproduced the main spatial trends of USUV in Germany and surrounding countries. The heterogeneous spread pattern arises from the interplay of diffusion and spatially varying temperature, which together may influence determine regions with higher transmission potential.
\vspace{2em}

\noindent \textbf{Keywords:} Usutu virus, diffusion, spread, birds, partial differential equations
\end{abstract}
	
\section{Introduction}
		
	Mosquito-borne arboviruses are a global health concern. Despite their main relevance in tropical countries, Europe has recorded an increased number of mosquito-borne disease cases, both imported and autochthonous \citep{Calzolari2016}. Factors such as climate and land-use change, or long-distance travel, among others, have been closely linked to the expansion of several medically relevant mosquito species in the European continent \citep{Cheng2018, Kolimenakis2021, Mangili2005, Semenza2021}. Among the tropical viruses in Europe, the Usutu virus (USUV) has stood out in recent years as a particularly lethal pathogen for birds, especially blackbirds (\textit{Turdus merula}) \citep{Lhken2017}. USUV is a mosquito-borne flavivirus of the \textit{Japanese encephalitis} complex and circulates between birds and transmitted by mosquitoes, especially of the \textit{Culex} genus \citep{DeMadrid1974, jost2011isolation, Lhken2017}. European blackbirds have been identified as the most susceptible amplifying host, while humans, equids, and other mammals are considered dead-end hosts, a transmission cycle similar to that of West Nile virus (WNV) \citep{Nikolay2015}. 
	
	USUV was first isolated from \textit{Cx. neavei} mosquitoes in South Africa in 1959 near the Usutu river \citep{woodall1964viruses}. More than five decades later, the virus was retrospectively detected to have caused an extensive die-off of common blackbirds in the Tuscany region of Italy, in 1996 \citep{Weissenbck2013}. Since then, USUV has been confirmed in Austria, Hungary, Germany, the United Kingdom and other European countries \citep{Simonin2024}. 
	
	In Germany, the first case of USUV in mosquitoes was isolated from a pool of \textit{Cx. pipiens} biotype \textit{pipiens} trapped in August 2010 in South West Germany \citep{jost2011isolation}. The phylogenetic analysis of this strain showed a close relationship with the USUV strain from Italy \citep{Engel2016}, indicating the spread within the European continent. Soon after, Germany experienced high blackbird mortality events \citep{Becker2012, Cadar2017}, and at least two cases of human infections were recorded among healthy blood donors \citep{Allering2012, cadar2017blood}. Infected blackbirds often show neurological clinical signs such as overturning, pedalling, and in-coordination \citep{Musto2022}. On the other hand, infected humans may exhibit mild symptoms ranging from rash, jaundice, fever, and many more \citep{Cadar2022, Roesch2019}. However, the emergence of neuroinvasive USUV disease has been also documented, including a recent fatal autochthonous case in an immunocompromised patient in Hungary in 2025 \citep{Tth2025, Szab2025}.

To date, USUV remains a major concern, with its re-emergence continuously threatening the avian population, and a potential to cause disease to humans \citep{Vilibiavlek2025}. Despite its growing public health and veterinary importance, USUV remains understudied compared to other arboviruses \citep{Cheng2018}. Its genetic similarity with WNV and co-circulation in the same ecological niche further complicates its diagnosis and can have negative effects on its epidemiology \citep{Nikolay2015}. Thus, a further understanding of its spreading pattern, climate drivers, and spillover risks to mammals is required \citep{Lhken2017}. Moreover, a deeper understanding of the underlying factors influencing transmission patterns and the geographical spread of USUV can help to develop effective early warning systems. Therefore, in this study, we formulate and study a PDE model for USUV with an aim of mechanistically linking local transmission processes with spatial movement of mosquitoes and birds, and environmental heterogeneity via temperature gradients, in order to explain and model the spread of USUV.

Although the primary focus of this study is Germany, the inclusion of neighbouring regions such as the Netherlands, Belgium, and Luxembourg helps us better capture cross-border transmission dynamics, particularly in border areas where reported cases indicate possible spillover driven by ecological connectivity and host movement.
	
Several models have been studied to understand the transmission dynamics of USUV. \cite{Rubel2008} developed and calibrated the first model to explain USUV dynamics in Austria. The compartmental model consists of bird and mosquito populations, with temperature-dependent mosquito parameters. This model draws insights on how temperature patterns can be coupled with traditional process-based models to explain the USUV transmission cycle. Since then, many models have been studied, extended from this initial model considering further epidemiological and ecological factors. \cite{Reiczigel2009} extended this model into a hierarchical Bayesian model which adds stochasticity to deterministic models. The stochastic version improved the fitting to real-world USUV data, revealing the structure of inter-dependencies between different model parameters.

 \cite{Brugger2009} modelled USUV using five global climate models with temperature-dependent parameters. The model considered four different climate-warming scenarios defined by the Intergovernmental Panel on Climate Change, IPCC (20 different model-scenario combinations). Results from the simulation of the worst-case scenarios identified an endemic equilibrium with a decline of the blackbird population of about 24\%. Machine learning models such as Maxent, are well known for their high predictive accuracy, especially when there is presence-only data. Therefore, \cite{Cheng2018} compared an environmental niche model with the compartmental model by \cite{Rubel2008} to evaluate the risk of USUV circulation in Europe. The study showed that the mechanistic model was able to capture the transmission potential but could overestimate the local transmission risk. On the other hand, the environmental niche models reproduced observed patterns yet underestimated emergence risk for new, previously uninfected regions. \cite{Lhken2017} studied a boosted regression tree-based model to identify areas suitable for USUV circulation and the effects of the virus on breeding bird populations in Germany. This modelling approach identified the southwestern region of Germany as the high-risk zone of USUV, although the virus circulation affected all federal states by the year 2018 \citep{Michel2019}.
	
While USUV models have been useful in identifying risk areas and potential outbreak drivers, such as temperature and rainfall, key uncertainties remain. Most existing models focus on the fixed spatial-temporal distribution of USUV. In fact, some models only explain the spatial distribution pattern in specific years without considering movement of USUV-infected specimens in time and space. However, in reality, hosts and vectors are not stationary but move locally roughly random manner in search of food, shelter and other dispersal factors \citep{Hamer2014, Michel2019}. Mathematically, this movement can be described by the diffusion process, represented by the Laplacian operator $(\Delta)$ added to each equation, which causes the population densities to flow from initially infected areas to more suitable ones across the domain. As a result, such a model becomes a reaction-diffusion PDE model, which can better depict real spatial patterns of USUV. Therefore, in this study, we extend the ordinary differential equation based model originally developed by \cite{Rubel2008} for the dynamics of USUV in Austria and we formulate a reaction-diffusion PDE model that includes bird and mosquito populations. This model incorporates the diffusion process, to simulate the random movement of birds and mosquitoes, while the reaction terms describe their interactions at each point in time and space.

\section{Model description}
	
In this study, we adopted the ordinary differential equation model by \cite{Rubel2008}. This model is extended to include the diffusion process representing the random movement of both mosquitoes and birds, and mosquito parameters depend on the spatial temperature data for Germany, Netherlands, Belgium and Luxembourg. Our extended version consists of a system of PDEs with populations of mosquitoes and birds (Figure \ref{f1} and system \ref{xa}). The mosquito population at any given time $t$ and location $\boldsymbol{x}\in (x,y)$ is divided into susceptible: $S_V,$ exposed: $E_V,$ and infectious mosquitoes: $I_V,$ with a total population $N_V=S_V+E_V+I_V.$ Mosquitoes are recruited into the susceptible class via a logistic recruitment term: $\dis b_VN_V\left[1-\frac{N_V}{K_V}\right].$ Susceptible mosquitoes that interact with infectious birds progress to the exposed class, with a temperature dependent mosquito biting rate:
	
	\begin{equation}\label{e1}
		\beta(T,\boldsymbol{x})=\frac{0.344}{\dis 1+1.231e^{\dis -0.184(T-20)}},
	\end{equation}
	
	and a probability of acquiring an infection from a successful bite of an infectious bird of $p_{v}.$ The force of infection on mosquitoes is given by 
	
	\begin{equation}\label{e2}
		\lambda_{BV}(T,\boldsymbol{x})=\frac{p_{v}\beta(T,\boldsymbol{x})I_B}{N_B}.
	\end{equation}
	
	In the exposed class, mosquitoes are infected but not yet infectious. Only after an incubation period of $\dis \frac{1}{\gamma_V(T,\boldsymbol{x})}$ days they become infectious at a latency rate:
	
	\begin{equation}\label{e3}
		\gamma_{V}(T,\boldsymbol{x})= \dis \frac{\dis 1}{\dis e^{\dis -0.09T+5.36}},
	\end{equation}
	
	which was fitted for WNV in \textit{Cx. pipiens} by \cite{Heidecke2024}. A WNV latency rate is adopted because suitable USUV data to estimate the USUV extrinsic incubation period is unavailable. However, given the similarities between the two viruses, this assumption is reasonable \citep{Rubel2008}. Mosquitoes in all health states die at a natural death rate: 
	
	\begin{equation}\label{e4}
		\mu_V(T,\boldsymbol{x})= \frac{0.0025T^2-0.094T+1.0257}{10},
	\end{equation}
	fitted in \cite{Rubel2008}.

	The population of birds under study is divided into susceptible: $S_B,$ exposed: $E_B,$ infectious: $I_B,$ recovered: $R_B,$ and dead birds: $D_B.$ Birds are recruited into the susceptible class through a logistic recruitment rate of $\dis b_BN_B\left[1-\frac{N_B}{K_B}\right],$ where $N_B=S_B+E_B+I_B+R_B$ is the total bird population. With a force of infection
	
	\begin{equation}
		\lambda_{VB}(T,\boldsymbol{x})=\frac{\phi_{B}(\boldsymbol{x})p_{b}(\boldsymbol{x})\beta(T,\boldsymbol{x})I_V}{N_V}
	\end{equation}
	
 susceptible birds $(S_B)$ move to the exposed class $(E_B)$. The parameter $p_b$ measures the probability that a successful bite by an infectious mosquito leads to a new bird infection, while $\beta(T,\boldsymbol{x})$ is the mosquito biting rate and $\phi_B$ is the mosquito to bird ratio. Similar to WNV models \citep{Bhowmick2020, Bhowmick2023, Laperriere2011, Mbaoma2024}, the mosquito to bird ratio is an important factor that determines the per biting pressure each bird receives, and therefore how efficiently a virus can spread.
	
Birds leave the exposed class $(E_B)$ at a latency rate of $\gamma_B,$ where they become infectious. A proportion $\nu_B$ of infectious birds $(I_B)$ die due to the USUV at a rate $\alpha_B,$ while a proportion $1-\nu_B$ recovers at the same rate. The natural mortality rate of birds at all life stages is denoted by $\mu_B.$ The schematic diagram of the model under study is shown in Figure (\ref{f1}), while the corresponding system of equations is described by system (\ref{xa}).
	
	\begin{figure}[H]
		\centering
		\includegraphics[width=1\textwidth]{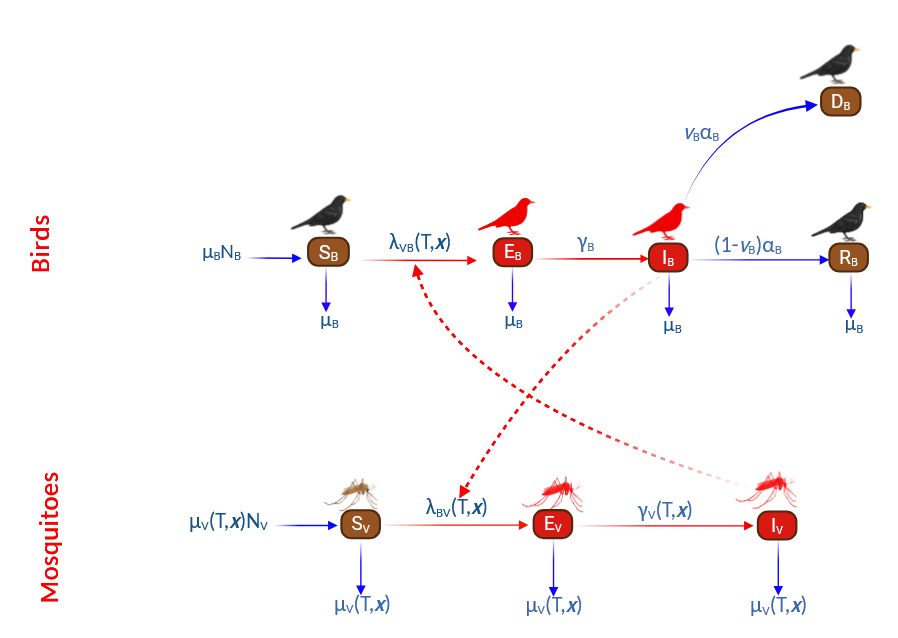}
		\caption{Flow chart diagram for the transmission dynamics of USUV. Red colours indicate infected classes, while dotted lines indicate cross-infection between birds and mosquitoes. Figure created using BioRender. Arbovirologie (2025), https://BioRender.com/r9wek95. Content not licensed under the Creative Commons Attribution (CC BY) license.}
		\label{f1}
	\end{figure}
	
	The system of equations becomes:
	\begin{equation} 
		\begin{split}
			\frac{\partial S_V}{\partial t} & \;=\; D_{1}\Delta S_V \quad+ \quad b_VN_V\left[1-\frac{N_V}{K_V}\right]-[\lambda_{BV}(T,\boldsymbol{x})+\mu_V(T,\boldsymbol{x})]S_V,\\
			\frac{\partial E_V}{\partial t} &\; =\; D_{1}\Delta E_V \quad+ \quad \lambda_{BV}(T,\boldsymbol{x})S_V-[\gamma_V(T,\boldsymbol{x})+\mu_V(T,\boldsymbol{x})]E_V,\\
			\frac{\partial I_V}{\partial t} &\; =\; D_{1}\Delta I_V \quad+\quad \gamma_V(T,\boldsymbol{x})E_V-\mu_V(T,\boldsymbol{x})I_V,\\
			\frac{\partial S_B}{\partial t} & \;=\; D_{2}\Delta S_B \quad +\quad b_BN_B\left[1-\frac{N_B}{K_B}\right]-[\lambda_{VB}(T,\boldsymbol{x})+\mu_B]S_B,\\
			\frac{\partial E_B}{\partial t} & \;=\; D_{2}\Delta E_B \quad +\quad \lambda_{VB}(T,\boldsymbol{x})S_B-[\gamma_B+\mu_B]E_B,\\
			\frac{\partial I_B}{\partial t} & \;= \; D_{2}\Delta I_B \quad + \quad \gamma_BE_B-[\alpha_B+\mu_B]I_B,\\
			\frac{\partial R_B}{\partial t} &\; = \; D_{2}\Delta R_B \quad +\quad (1-\nu_B)\alpha_BI_B-\mu_BR_B,\\
			\frac{\partial D_B}{\partial t} & \;=\;  \quad\quad\quad\quad\quad\quad\quad\nu_B\alpha_BI_B,
		\end{split}\label{xa}
	\end{equation}
	
	subject to the initial conditions: $S_V(0,x)=\psi_1(x)$, $E_V(0,x)=\psi_2(x)$, $I_V(0,x)=\psi_3(x)$, $S_B(0,x)=\psi_4(x),$ $E_B(0,x)=\psi_5(x)$, $I_B(0,x)=\psi_6(x)$  $R_B(0,x)=\psi_7(x),$ $D_B(0,x)=D_{B_0},$ where $\displaystyle \boldsymbol{x}=(x,y)\in\Omega,$ and $\displaystyle \Delta =\frac{\partial^2(\cdot)}{\partial x^2}+\frac{\partial^2(\cdot)}{\partial y^2}.$ We assume the system moves in a region $\displaystyle \Omega\subset \mathbb{R}^2,$ with a smooth boundary $\partial\Omega$ according to Fick's law \citep{Fick1855}, so that in the initial conditions, $\boldsymbol{x}\in\Omega,$ where $\psi_i\in C^2(\Omega)\cap C(\bar{\Omega})$ and subject to the homogeneous Neumann boundary conditions:
	
	$$\frac{\partial S_V}{\partial \nu}=\frac{\partial E_V}{\partial \nu}=\frac{\partial I_V}{\partial \nu}=\frac{\partial S_{B}}{\partial \nu}=\frac{\partial E_{B}}{\partial \nu}=\frac{\partial I_{B}}{\partial \nu}=\frac{\partial R_{B}}{\partial \nu}=0,$$
	
	$\displaystyle \text{for}\; \boldsymbol{x}\in\partial\Omega,\;t>0,$ and $\nu$ is the unit outer normal to $\Omega.$ Variables and parameters used in the model are summarized in Table (\ref{t02}).

Detailed proofs of the existence, positivity, and boundedness of solutions to the PDE model are presented in Appendix~(\ref{mt}).
	
\section{Methods}

\subsection{Temperature forcing}

Temperature is a key environmental variable for USUV transmission and spread \citep{Rubel2008}. It is known for its influence on the mosquito life-history traits, including biting rate, development time, survival, and dispersal. In addition, viral processes such as the extrinsic incubation period within the mosquito shorten at higher temperatures, increasing transmission potential. Consequently, temperature directly modulates the effective transmission rate and can create seasonal windows when outbreaks are possible, especially in temperate regions such as Germany. In this study, temperature-dependent parameters, i.e., mosquito biting rate, latency rate and mortality rate, are adopted from experimental studies \citep{Rubel2008} and \citep{Heidecke2024}. The thermal functions are defined in Eqn. (\ref{e1}, \ref{e3}, \ref{e4}) and their spatio-temporal plots in Fig. (\ref{y}). 

\begin{figure}[H]
	\centering
	\begin{subfigure}{\textwidth}
		\centering
		\includegraphics[width=1.0\textwidth]{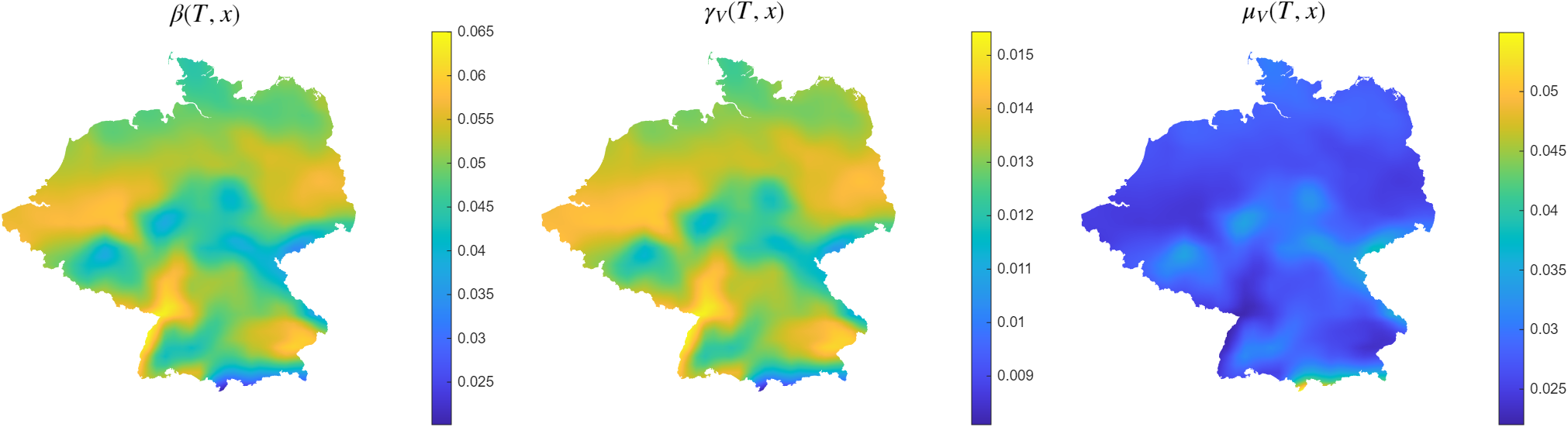}
		\caption{Space-dependent thermal mosquito parameters for the year 2018.}
	\end{subfigure}
	\hfill
	\begin{subfigure}{\textwidth}
		\centering
		\includegraphics[width=1.0\textwidth]{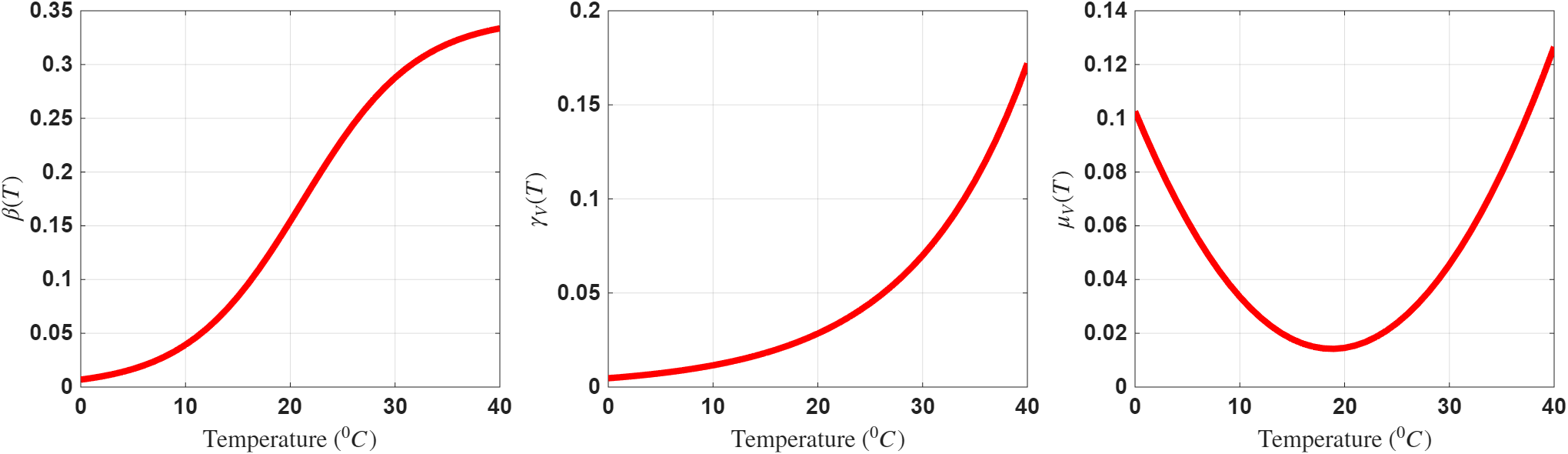}
		\caption{Temperature evolution for mosquito parameters.}
	\end{subfigure}
	\caption{\small Spatio-temporal mosquito biting rate $\beta(T,x)$, mortality rate $\mu_V(T,x)$ and extrinsic incubation rate $\gamma_V(T,x)$ over the geometry (merged maps of Belgium, Germany, Luxembourg and the Netherlands).}
	\label{y}
\end{figure}

\subsection{USUV free equilibrium point and the local reproduction number}
	
	System (\ref{xa}) admits an USUV-free equilibrium point $\displaystyle \mathcal{E}_0:$ 
	
	\begin{equation}
		\mathcal{E}_0:\quad \left[S_V^0,E_V^0,I_V^0,S_B^0,E_B^0,I_B^0,R_B^0,D_B^0\right]=\left[K_V\left[1-\frac{\mu_V}{b_V}\right],0,0,K_B\left[1-\frac{\mu_B}{b_B}\right],0,0,0,0\right].
	\end{equation}
	
	The uniqueness of $\mathcal{E}_0$ follows from [\citep{Wang2018}, Lemma 2.1]. The local basic reproductive number denoted by $R_0(\boldsymbol{x})$ is the average number of new infections produced by a single infectious mosquito or bird. In this case, we use it to create spatial USUV risk maps such that regions where $R_0(\boldsymbol{x})$ imply a higher risk, compared to regions where it is low. $R_0(\boldsymbol{x})$ is computed using the next generation matrix method defined by \citep{Diekmann2009}, and applied in an USUV model by \citep{Rubel2008}. Infected terms of system (\ref{xa}) are re-written in terms of the rate of appearance of new infections in compartment $i$, denoted $F_i,$ and the rate of transfer into and out of compartment $i,$ denoted $V_i,$ where $i=E_V, I_V, E_B, I_B$ \citep{vandenDriessche2002}, such that:

	$$\mathcal{F}(\boldsymbol{x})=\left[\setlength\arraycolsep{15pt}
	\begin{array}{c}
		\displaystyle \frac{p_v\beta(\boldsymbol{x})I_BS_V}{N_B}  \\[3.5ex]
		0 \\[3.5ex]
		\displaystyle  \frac{\phi_Bp_b\beta(\boldsymbol{x})I_VS_B}{N_V}  \\[3.5ex]
		0  \\
	\end{array}\right],\quad\text{and}\quad 
	\mathcal{V}(\boldsymbol{x})=\left[\setlength\arraycolsep{15pt}
	\begin{array}{c}
		\displaystyle 	\left(\gamma_V(\boldsymbol{x})+\mu_V(\boldsymbol{x})\right)E_V \\[3.5ex]
		\displaystyle 	-\gamma_V(\boldsymbol{x})E_V+\mu_V(\boldsymbol{x})I_V \\[3.5ex]
		\displaystyle  \left(\gamma_B+\mu_B\right)E_B \\[3.5ex]
		\displaystyle  -\gamma_BE_B+(\alpha_B+\mu_B)I_B \\
	\end{array}\right].
	$$

	Taking partial derivatives, we have: 
	
	$$F(\boldsymbol{x})=\left[\setlength\arraycolsep{4pt}
	\begin{array}{cccc}
		\displaystyle 0&0&0&\displaystyle \frac{p_v\beta(\boldsymbol{x})S_V^0}{N_B^0}  \\[3.5ex]
		0&0&0&0 \\[3.5ex]
		\displaystyle  0&\displaystyle \frac{\phi_Bp_b\beta(\boldsymbol{x})S_B^0}{N_V^0}&0&0  \\[3.5ex]
		0&0&0&0  \\
	\end{array}\right],\quad\text{and}\quad 
	V(\boldsymbol{x})=\left[\setlength\arraycolsep{2pt}
	\begin{array}{cccc}
		\gamma_V(\boldsymbol{x})+\mu_V(\boldsymbol{x}) & 0 & 0 & 0 \\[3.5ex]
		-\gamma_V & \mu_V & 0 & 0 \\[3.5ex]
		0 & 0 & \gamma_B+\mu_B & 0 \\[3.5ex]
		0 & 0 & -\gamma_B & \alpha_B+\mu_B
	\end{array}\right].
	$$
	The basic USUV reproductive number is given by the largest eigenvalue of the matrix $FV^{-1},$ and is given by 
	
	\begin{equation}
		R_0(\boldsymbol{x})= \sqrt{\left(\frac{\beta(\boldsymbol{x})p_v(\boldsymbol{x})\gamma_B(\boldsymbol{x})}{[\gamma_B(\boldsymbol{x})+\mu_B(\boldsymbol{x})][\alpha_B(\boldsymbol{x})+\mu_B(\boldsymbol{x})]}\right)\times \left(\frac{\phi_B(\boldsymbol{x})\beta(\boldsymbol{x})p_b(\boldsymbol{x})\gamma_V(\boldsymbol{x})}{[\gamma_V(\boldsymbol{x})+\mu_V(\boldsymbol{x})]\mu_V(\boldsymbol{x})}\right)}.
	\end{equation}

	The USUV reproductive number $(R_{0}(\boldsymbol{x}))$ represents the average number of new infections caused by a single infected mosquito or bird, in a completely susceptible population. In precise, $(R_{0}(\boldsymbol{x}))$ can be written as 
	
	$$R_{0}(\boldsymbol{x})=\dis \sqrt{r_b\times r_m},\; \text{where}\; r_b=\frac{\beta(\boldsymbol{x})p_v\gamma_B}{(\gamma_B+\mu_B)(\alpha_B+\mu_B)}\; \text{and}\; r_m=\frac{\beta(\boldsymbol{x})p_b\gamma_V(\boldsymbol{x})}{(\gamma_V(\boldsymbol{x})+\mu_V(\boldsymbol{x}))\mu_V(\boldsymbol{x})}$$
	
	can be interpreted as the average number of new bird infections per infectious mosquito, and the average number of new mosquito infections per infectious birds, respectively. $(R_{0}(\boldsymbol{x}))$ shows the spatial variation of USUV across different regions (Figure \ref{fb}). For a specific location in Germany, higher values of $(R_{0}(\boldsymbol{x}))$ indicate a heightened USUV risk compared to regions with lower values. The spatial heterogeneity in $(R_{0}(\boldsymbol{x}))$ further reflects temperature associated geographical variation of USUV transmission and spread.

\subsection{USUV data}

Unlike many human and animal (pet and livestock) infectious diseases, USUV circulation is not easy to detect because it primarily affects wild bird populations, which are more challenging to monitor. Human clinical surveillance is limited because infections in humans are usually asymptomatic. In some instances, infections have been identified incidentally through blood donor screening \citep{Angeloni2023}, but such detections may not reliably reflect broader infection patterns.

As a result, a nationwide surveillance programme was established in Germany, encouraging citizens to report and submit dead birds for laboratory testing. This citizen science initiative, led by the Bernhard Nocht Institute for Tropical Medicine, enabled large-scale spatial and temporal monitoring of USUV circulation nationwide. However, as with most passive surveillance systems, certain limitations need to be recognised, including potential reporting bias towards urban areas, variation in carcass condition caused by shipping delays and decomposition, and occasional inaccuracies in recorded collection locations. Despite these challenges, USUV data for Germany indicated a widespread presence across the country by 2018 (Figure \ref{fb}), with regions in the southwest (along the Upper Rhine valley) and Hamburg showing high circulation.

While the available German surveillance data largely depend on passive monitoring of dead birds, the available data from the Netherlands were collected within a comprehensive One Health active surveillance program for USUV in 2016. This nationwide monitoring system included sampling of live, free-ranging birds and dead birds until 2022. Live birds were captured by trained volunteer bird ringers across the country and sampled via throat and cloacal swabs, as well as blood, for serological testing. Dead birds were mainly reported through citizen submissions and examined at the Dutch Wildlife Health Centre, where necropsies were performed, and brain tissue was tested for USUV. From 2019, mosquitoes were also sampled for USUV, and all submitted birds underwent systematic testing regardless of the suspected cause of death \citep{Mnger2025}.

The increased USUV circulation in both birds and humans in the neighbouring Luxembourg prompted the initiation of passive surveillance as an early warning system \citep{Snoeck2022, VilibicCavlek2020}. A total of 61 samples were tested as part of surveillance based dead-bird monitoring. During the surveillance period, 33 birds were examined, and only one Eurasian blackbird (\textit{Turdus merula}) detected in September 2020 tested positive for USUV, marking the first confirmed case in the country. 

In Belgium, USUV circulation has also been documented primarily through regional passive surveillance, rather than a continuous nationwide monitoring program similar to that in Germany and the Netherlands. Following unusual bird mortality events, USUV was first detected in southern Belgium in 2016, with confirmed cases identified through RT-qPCR testing of dead wild birds submitted to wildlife rehabilitation centres \citep{Benzarti2020, Cadar2017}. Between 2017 and 2018, expanded passive monitoring detected numerous USUV-positive birds across several species, particularly Eurasian blackbirds (\textit{Turdus merula}), indicating active local transmission and suggesting the virus had become established \citep{Benzarti2020}.

Our data for fitting the model thus comes from various heterogeneous sources. More precisely, the German and Dutch data mainly originate from citizen science projects conducted in those countries. Cases from the Netherlands are published in \cite{Mnger2025} and \cite{J.M.A.vandenBrand2025}, while the single case observed in Luxembourg in 2020 has been documented in \citep{Snoeck2022}, and the Belgian cases were reported in several studies, including \citep{Cheng2018, Qiu2025}. Citizen science data from Germany are published along with this study. Consequently, the data are only intended for reference to help identify regions with higher circulation risks.

	\subsection{Simulation framework}

	In the model simulation, the densities of susceptible birds and mosquitoes are assumed to be spatially homogeneously distributed across the entire domain. An estimated area for mainland Germany, Netherlands, Belgium and Luxembourg from our simplified geometry is A=426,840 km$^2\approx 4.2684\times 10^{11}$ m$^2.$ Thus, assuming a total bird population of 110 million all over Germany, we estimate a bird density of $110\times 10^6/A\approx 2.6\times 10^{-4}$ susceptible birds per square meter. Given the high abundance of \textit{Culex} mosquitoes \citep{Becker2012c, Rudolf2013, Walther2017}, we assume an initial susceptible mosquito population of $2\times 10^9/A$ yielding a density of 0.00467 susceptible mosquitoes per square meter.
	
	Diffusion coefficients are estimated using inverse modelling, by adjusting them within the species reasonable mobility rates per day. The selected values are then interpreted using the root-mean-square displacement (RMSD), a metric that describes the dispersal distance species move from the initial source under diffusion after a time $\tau$ \citep{Einstein1905, metelmann2019development}. The RMSD is calculated using the formula:
	
	$$\text{RMSD} =\sqrt{2pD\tau},$$
	
	where p = 2 is the spatial dimension, D is the diffusion coefficient and $\tau$ is the time step. 
	
	Although reported cases likely under-represent the true extent of circulation due to a risk of under-reporting, they nevertheless serve as reliable indicators of regions where transmission is active. Because of this, we seed initial conditions for the infected classes in the Southwestern (around Weinheim) and Western regions (around Bonn), as described in the observed dataset. Infections were confirmed in these regions and their surroundings by the end of 2011, and thus the initial condition distribution is centred at these regions, according to a Gaussian distribution:

	\begin{equation}\label{r}
		f(t,\boldsymbol{x})= e^{\dis -\frac{(x-x_0)^2+(y-y_0)^2}{2\sigma^2}}.
	\end{equation}
	
	In eq (\ref{r}), $(x_0,y_0)$ are the central nodes of the infected regions, and $\sigma=40$ km is the standard deviation which controls the spatial radius over which the initial infection is present. Integrating eq (\ref{r}) over the mesh gives the total mass $B,$ which normalizes initial conditions for infected classes defined by the Gaussian. To ensure effective interaction of reaction terms during the simulation process, we choose initial infected populations in such a way that $S_V/A-E_V/B-I_V/B>0$ and $S_B/A-E_B/B-I_B/B-R_B/B>0.$ The model variables and initial conditions are summarized in Table (\ref{t02}). 
	
	Our PDE model is solved numerically using the Matlab PDEToolbox \citep{MATLAB2025}, which solves PDEs of the form:
	\begin{equation}
		m\frac{\partial^2u}{\partial t^2}+d\frac{\partial u}{\partial t}-\nabla \cdot\left(c\nabla u\right)+au=f.
	\end{equation}

	In our case, $m=0, d=1,$ $a=0,$ and matrices $c$ and $f$ are defined in Appendix (\ref{mat}). The matrix $c$ consists of the diffusion coefficients, while $f$ stores reaction terms. The full algorithm for the solver is fully explained in our West Nile virus study \citep{duve2025modeling}. It is important to compare our model results with the observed data. However, given that the PDE model produces a continuous solution, while the observed data is discrete, fitting the model is not straightforward. We fitted our model predictions by aggregating our annual PDE solution and the annual observed data into the GADM level-2 administrative units of Germany \citep{GADM}, comprising 356 rural districts (Landkreise) and urban districts (kreisfreie Städte). We then tested for a Spearman's correlation between the simulated districts and observed, by checking if high predicted regions correspond to regions with high observed cases. 
	
	\begin{table}
		\caption{\small Definition of state variables and parameters.}
		\renewcommand{\arraystretch}{1.5}
		\small
		\centering
		\begin{adjustbox}{width=1\textwidth}
			\begin{tabular}{llll}
				\toprule 	
				\textbf{Variable/Parameter} & \textbf{Definition}&\textbf{Value}& \textbf{Source} \\
				\midrule 
				$S_V$ & susceptible mosquitoes (mosquitoes/m$^2$)&$2\times 10^9/A$&defined by the authors\\ 
				$E_V$ & exposed mosquitoes (mosquitoes/m$^2$)&$10\times 10^6/B$&defined by the authors\\ 
				$I_V$ & infectious mosquitoes (mosquitoes/m$^2$)&$10\times 10^6/B$&defined by the authors\\ 		
				$S_B$ & susceptible birds (birds/m$^2$)&$110\times 10^6/A$&defined by the authors\\ 
				$E_B$ & exposed birds (birds/m$^2$)&$1\times 10^6/B$&defined by the authors\\
				$I_B$ & infectious birds (birds/m$^2$)&$1\times 10^6/B$&defined by the authors\\ 
				$R_B$ & recovered birds (birds/m$^2$)&$1\times 10^6/B$&defined by the authors\\ 
				$D_B$ & dead birds (birds/m$^2$)&$1\times 10^6/B$&Defined by the authors\\ 		
				$\beta(T,\boldsymbol{x})$ &mosquito biting rate&Eq (\ref{e1})&\citep{Rubel2008}\\ 
				$\gamma_V(T,\boldsymbol{x})$ &latency rate of mosquitoes&Eq (\ref{e3})&\citep{Heidecke2024}\\ 
				$\mu_V(T,\boldsymbol{x})$ &mortality rate of mosquitoes&Eq (\ref{e4})&\citep{Rubel2008}\\ 
				$\gamma_B(\boldsymbol{x})$ & latency rate of birds&$0.667$&\citep{Rubel2008}\\
				$\alpha_B(\boldsymbol{x})$ & removal rate of birds&$0.0182$&\citep{Rubel2008}\\
				$\nu_B(\boldsymbol{x})$ & fraction of birds dying due to the infection&$0.3$&\citep{Rubel2008}\\
				$\mu_B(\boldsymbol{x})$ & mortality rate of birds&$0.0012$&\citep{Rubel2008}\\
				$p_{v}(\boldsymbol{x})$ &probability of virus transmission by infectious birds&0.125&\citep{Rubel2008}\\ 
				$p_{b}(\boldsymbol{x})$ &probability of virus transmission by infectious mosquitoes&1.000&\citep{Rubel2008}\\  
				$\phi_{B}(\boldsymbol{x})$ &mosquito to bird ratio&Table (\ref{t03})&defined by the authors\\
				$K_V$ &environmental carrying capacity of mosquitoes&$S_V(0,\boldsymbol{x})$&\citep{Rubel2008}\\ 
				$K_B$ &environmental carrying capacity of birds&$S_B(0,\boldsymbol{x})$&\citep{Rubel2008}\\ 
				$D_1(\boldsymbol{x})$ &diffusion coefficient for mosquitoes&Table (\ref{t03})&defined by the authors\\ 
				$D_2(\boldsymbol{x})$ &diffusion coefficient for birds&Table (\ref{t03})&defined by the authors\\ 
				\bottomrule
			\end{tabular}
		\end{adjustbox}
		\label{t02}
	\end{table}

	\section{Results}
	
With USUV initially detected in Weinheim and Bonn in 2011, a notable range expansion is observed towards Belgium, the Netherlands, and northern Germany (Figure \ref{f2}). Cases increased since 2011, although the spread has been slower over the years. The root mean squared displacement (RMSD) in both mosquitoes and birds increased steadily during this period. The Spearman's correlation values, reported as $\rho$, ranged from 0.29 to 0.44 across all years (2011-2018), indicating a weak positive monotonic relationship between the observed data and the model predictions (Figure \ref{g2}). This could be because the PDE solution is continuous and highly sensitive to small values, similar to noise, which may be present in regions without confirmed cases. On the other hand, under-reporting cannot be ruled out either. However, despite this, there is some visual agreement between the observed distributions and the predicted cases, with the main circulation area remaining in the south-west (around the Rhein valley) (Figures \ref{f2}, \ref{l2}, \ref{k2}, \ref{g2}). In 2016, an isolated case was recorded in the eastern part of Germany, within Berlin, and more cases were observed in the Netherlands. Since then, a rising number of new cases has been detected in Belgium, the Netherlands, and subsequently across Germany. Luxembourg lies in a low-risk area (Figure \ref{fb}), and the first case in this country was reported in 2020.

	\begin{table}
		\caption{\small Values for the diffusion coefficients for mosquitoes and birds $(D_1, D_2)$ in km$^2$ per day, root mean squared displacement (RMSD) in km, mosquito to bird ratio $(\phi_B)$ Spearman's rank correlation coefficient $(\rho),$ and p-values.}
		\renewcommand{\arraystretch}{1.1}
		\small
		\centering
		\begin{adjustbox}{width=1.0\textwidth}
			\begin{tabular}{llllllll}
				\toprule 	
				\textbf{Year}&D$_1$&D$_2$ &RMSD$_1$&RMSD$_2$&$\phi_B$&$\rho$& p-value\\
				\midrule 
				2012& 3.0 & 6.0  &57.4  & 81.2  &30& 0.37 &$5.5\times 10^{-13}$\\
				
				2013& 8.0 & 10.0 &93.8 & 104.9 &40& 0.38 &$1.8\times 10^{-13}$\\
				
				2014& 3.0 & 6.0  &57.4 & 81.2 &40& 0.40 &$6.1\times 10^{-15}$\\
				
				2015& 6.0 & 9.0  &81.2 & 99.5 &30& 0.41 &$1.4\times 10^{-15}$\\
				
				2016& 5.0 & 8.0  &74.2 &93.8&50& 0.44 &$5.7\times 10^{-18}$\\
				
				2017& 5.0 & 8.0  &74.2 & 93.8 &30& 0.41 &$3.4\times 10^{-16}$\\
				
				2018& 5.0 & 8.0  &74.2 & 93.8 &40& 0.29 &$4.4\times 10^{-8}$\\
				\bottomrule
			\end{tabular}
		\end{adjustbox}
		\label{t03}
	\end{table}

	\begin{figure}[H]
		\centering
		\includegraphics[width=1.1\textwidth]{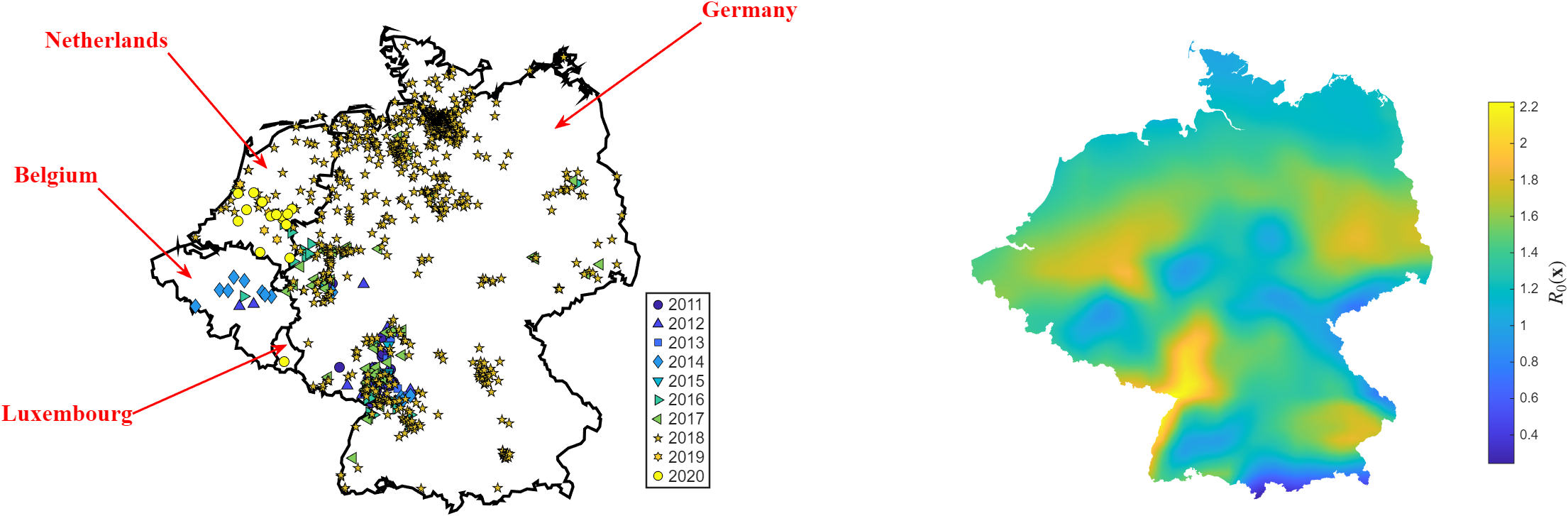}
		\caption{\small Observed data.}
		\caption{\small USUV data for the years 2011-2020 (Left) and the basic reproductive number $R_0(\boldsymbol{x})$ (Right), for the year 2018.}
		\label{fb}
	\end{figure}

	\begin{figure}[H]
		\centering
		\begin{subfigure}{0.35\textwidth}
			\centering
			\includegraphics[width=\textwidth]{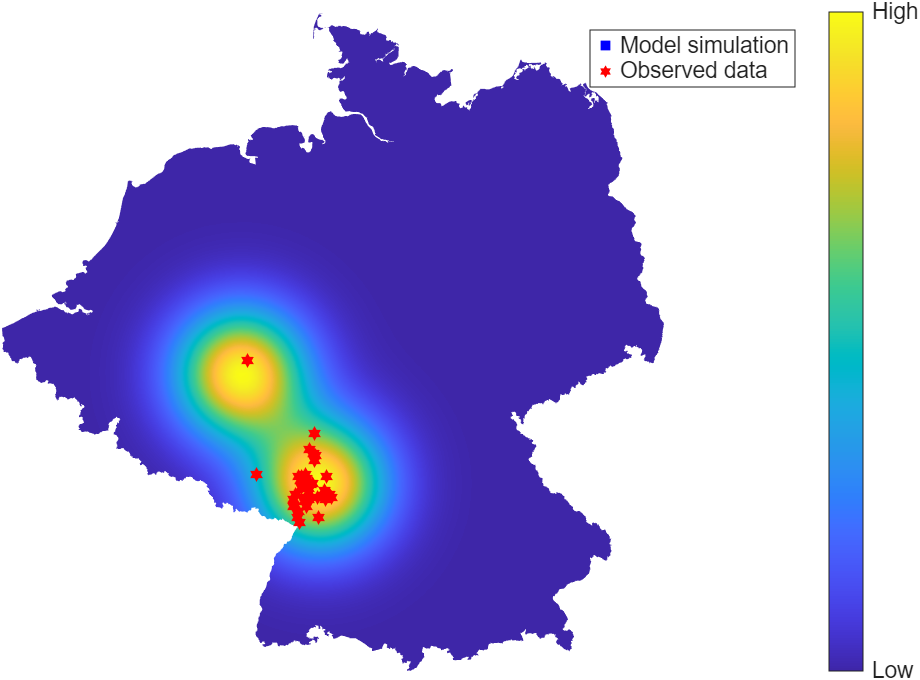}
			\caption{\small Model initialization (2011).}
			\label{f01}
		\end{subfigure}
		\hfill
		\begin{subfigure}{0.35\textwidth}
			\centering
			\includegraphics[width=\textwidth]{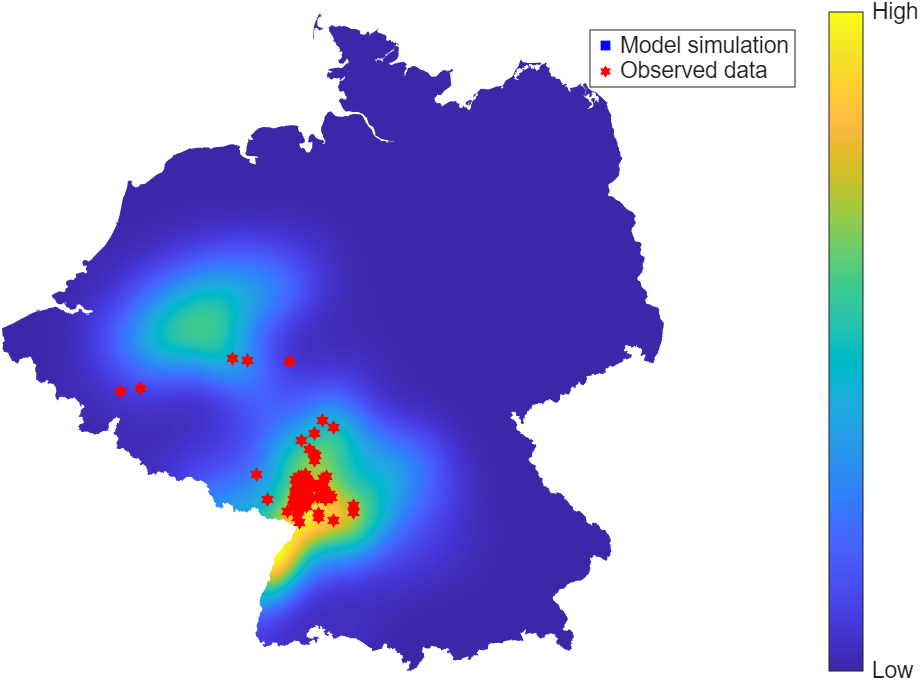}
			\caption{\small 2011-2012.}
			\label{f03}
		\end{subfigure}
		\hfill
		\begin{subfigure}{0.35\textwidth}
			\centering
			\includegraphics[width=\textwidth]{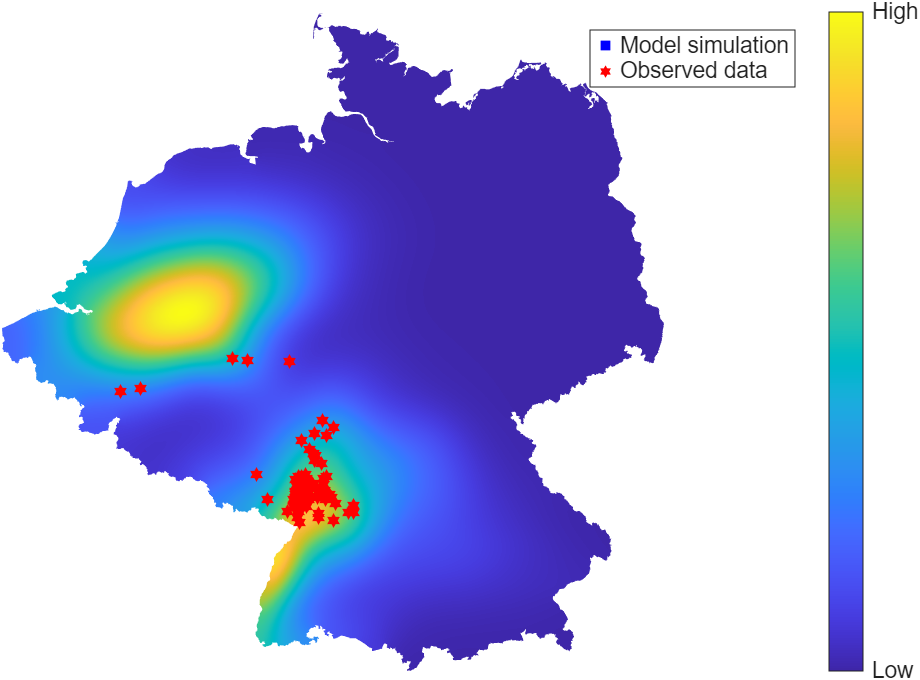}
			\caption{\small 2011-2013.}
			\label{f05}
		\end{subfigure}
		\hfill
		\begin{subfigure}{0.35\textwidth}
			\centering
			\includegraphics[width=\textwidth]{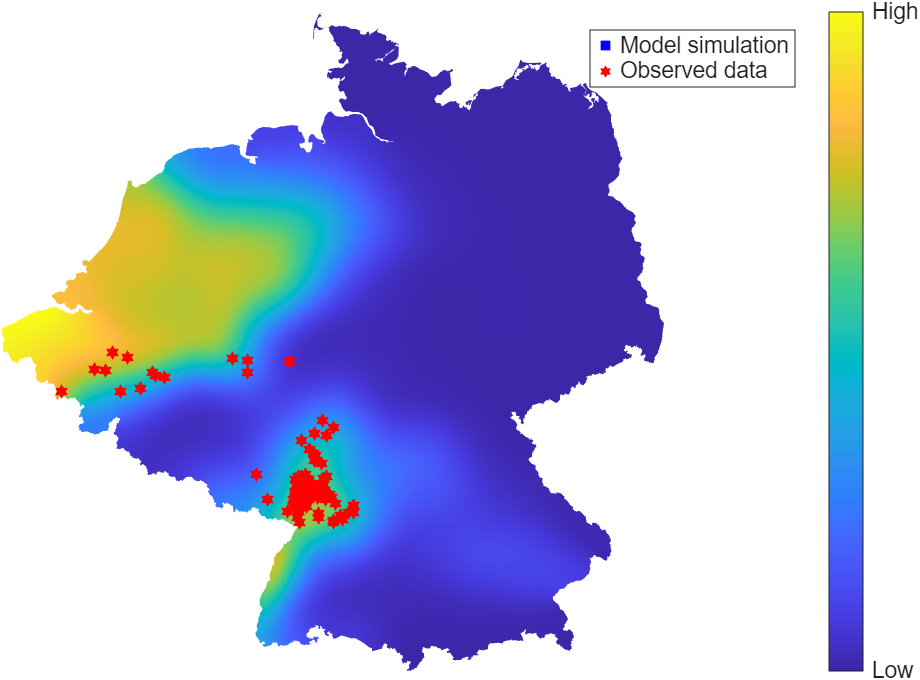}
			\caption{\small 2011-2014.}
			\label{f06}
		\end{subfigure}
		\hfill
		\begin{subfigure}{0.35\textwidth}
			\centering
			\includegraphics[width=\textwidth]{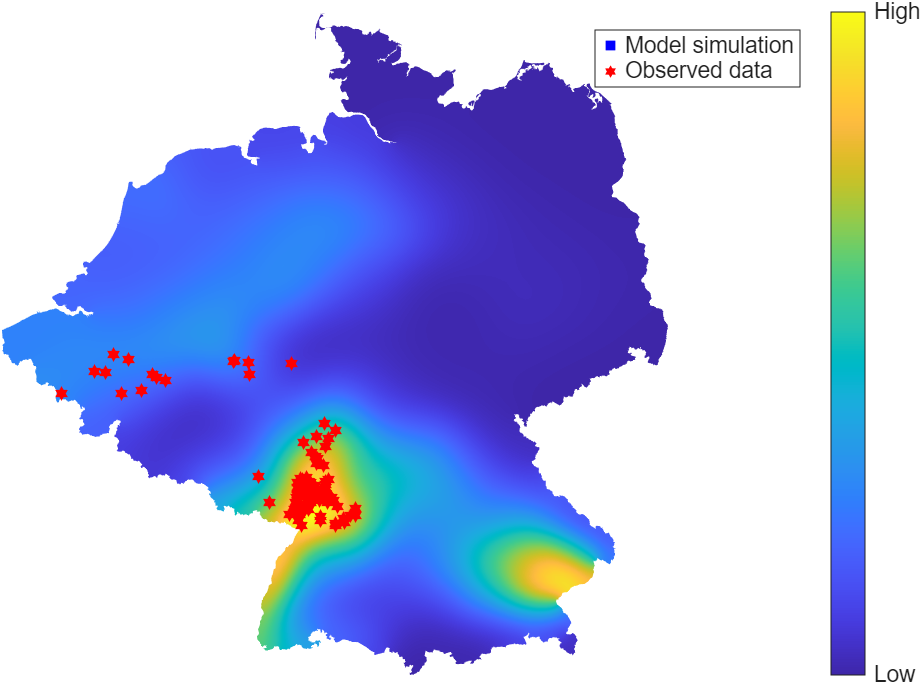}
			\caption{\small 2011-2015.}
			\label{f07}
		\end{subfigure}
		\hfill
		\begin{subfigure}{0.35\textwidth}
			\centering
			\includegraphics[width=\textwidth]{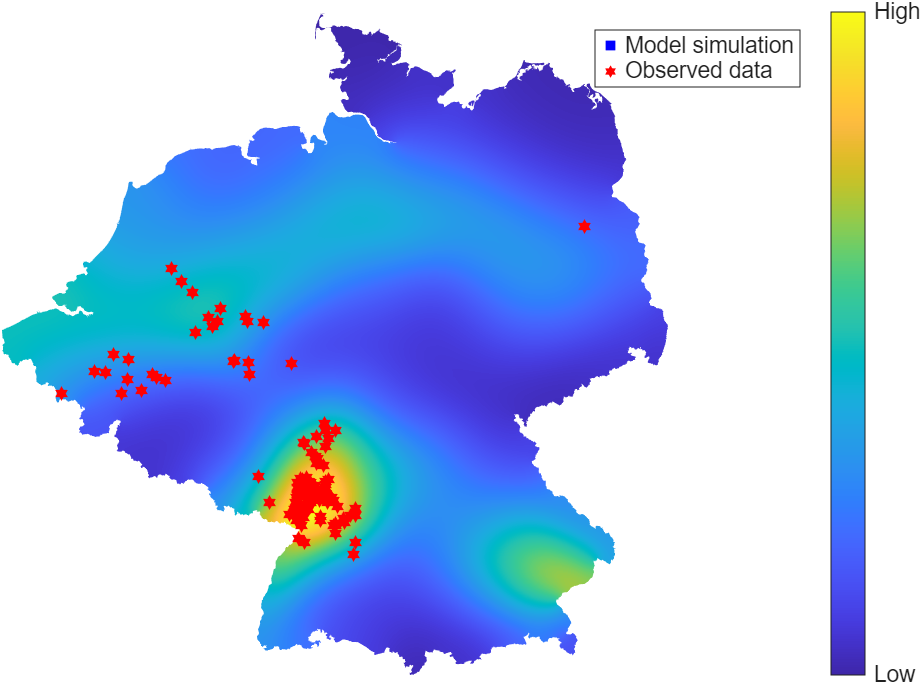}
			\caption{\small 2011-2016.}
			\label{f08}
		\end{subfigure}
		\hfill
		\begin{subfigure}{0.35\textwidth}
			\centering
			\includegraphics[width=\textwidth]{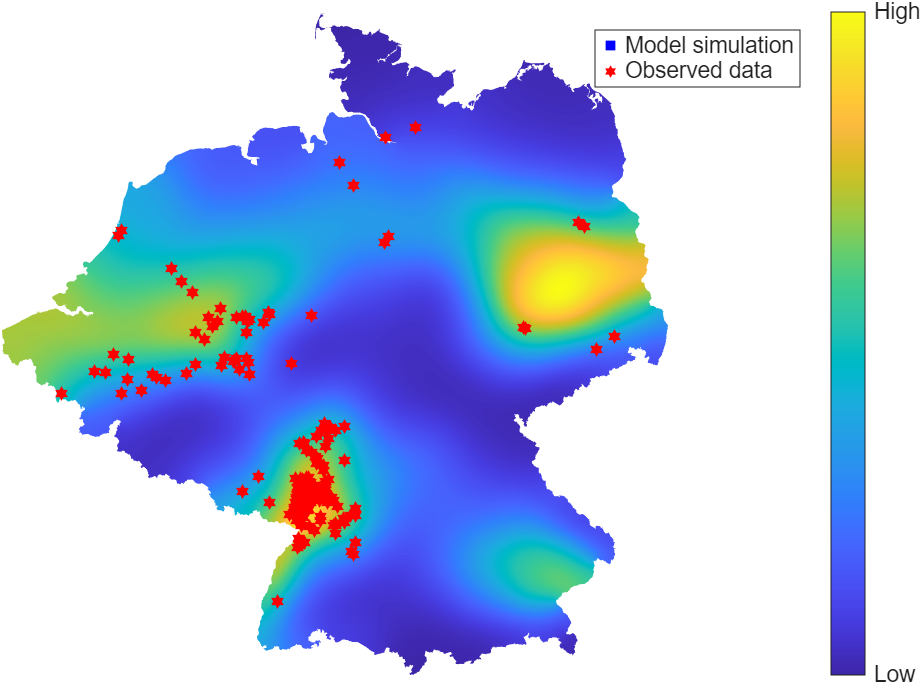}
			\caption{\small 2011-2017.}
			\label{f09}
		\end{subfigure}
		\hfill
		\begin{subfigure}{0.35\textwidth}
			\centering
			\includegraphics[width=\textwidth]{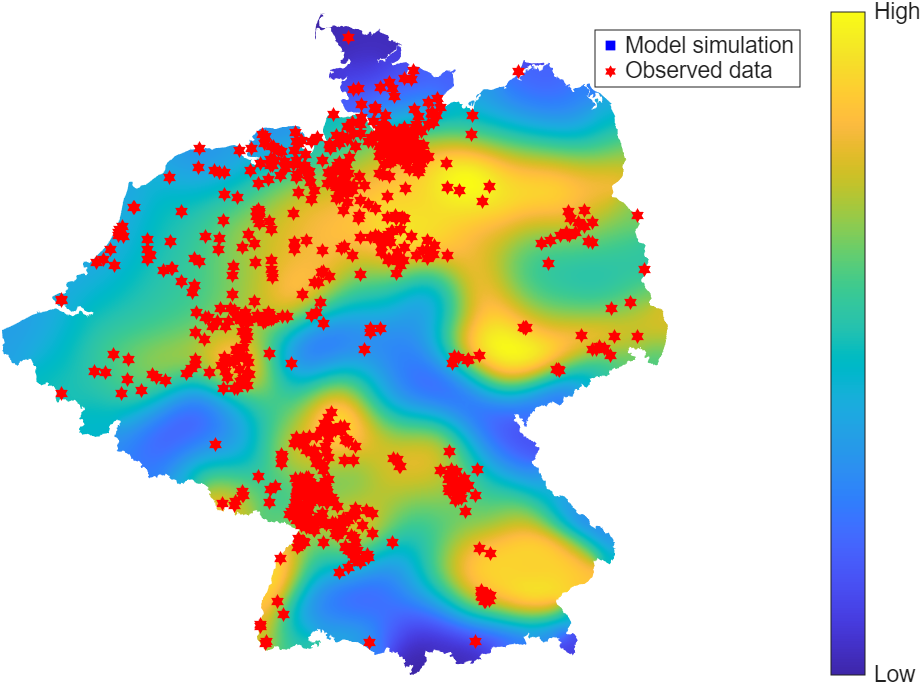}
			\caption{\small 2011-2018.}
			\label{f10}
		\end{subfigure}
		\hfill
		\caption{\small Model simulation compared to observed USUV data during the years 2011-2018.}
		\label{f2}
	\end{figure}

		\begin{figure}[H]
		\centering
		\begin{subfigure}{0.75\textwidth}
			\centering
			\includegraphics[width=\textwidth]{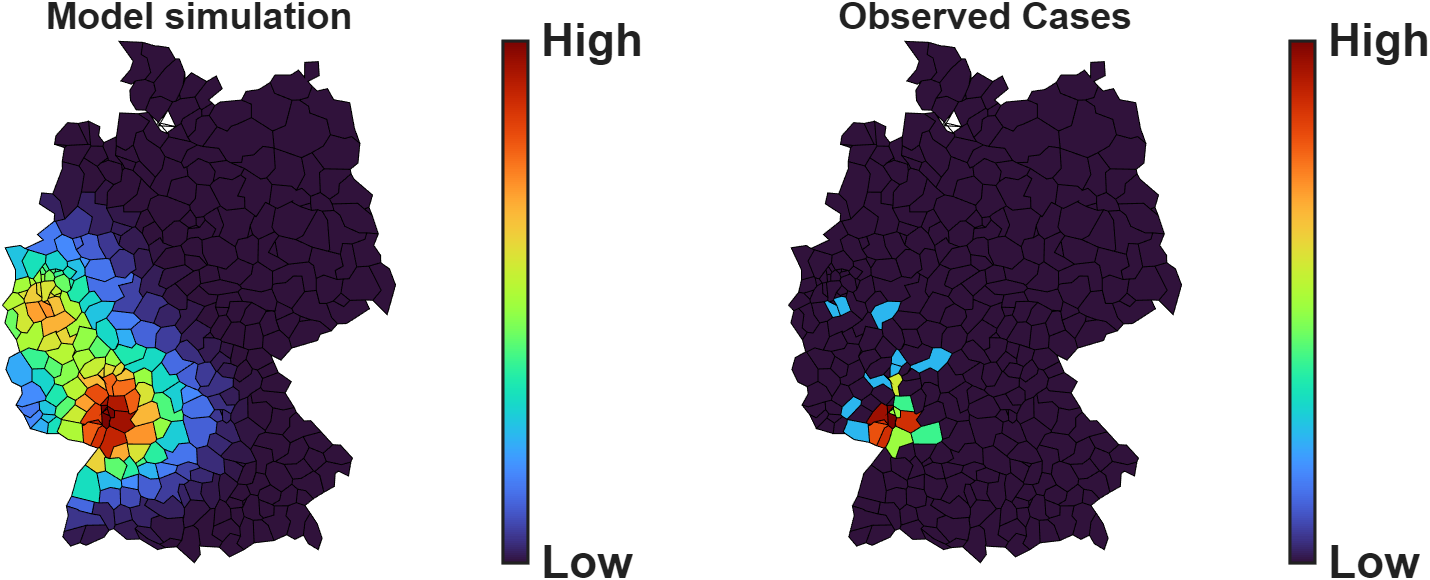}
			\caption{\small 2011-2012.}
			\label{l03}
		\end{subfigure}
		\hfill
		\begin{subfigure}{0.75\textwidth}
			\centering
			\includegraphics[width=\textwidth]{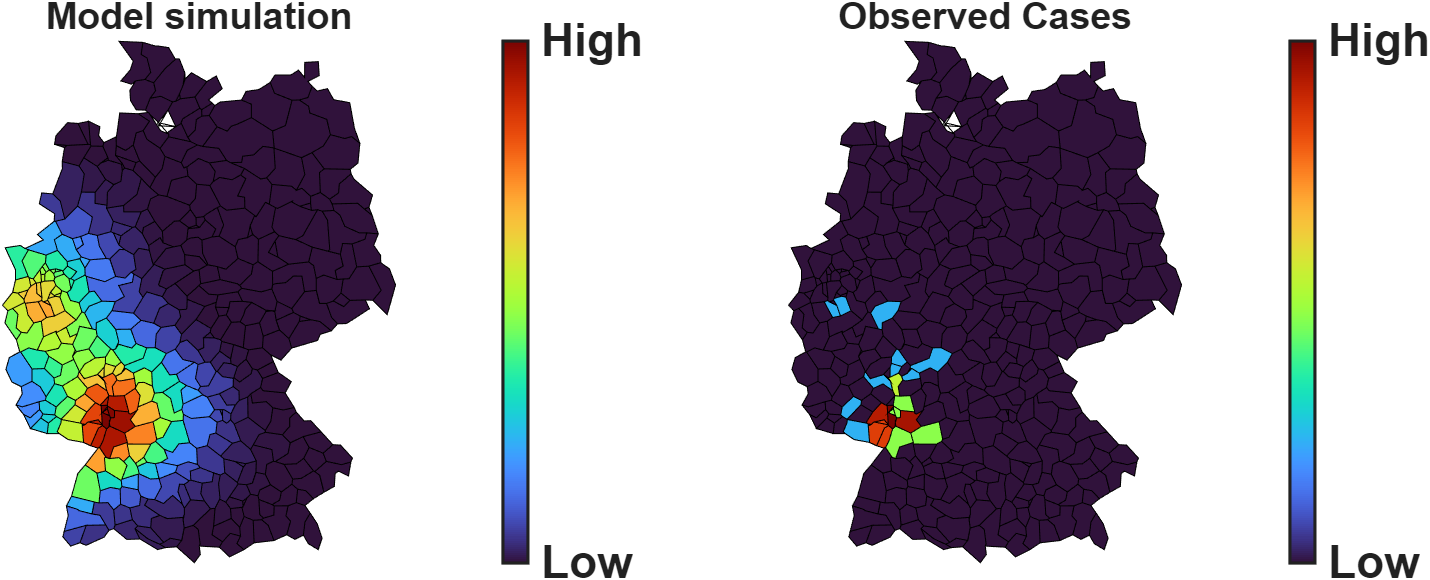}
			\caption{\small 2011-2013.}
			\label{l05}
		\end{subfigure}
		\hfill
		\begin{subfigure}{0.75\textwidth}
			\centering
			\includegraphics[width=\textwidth]{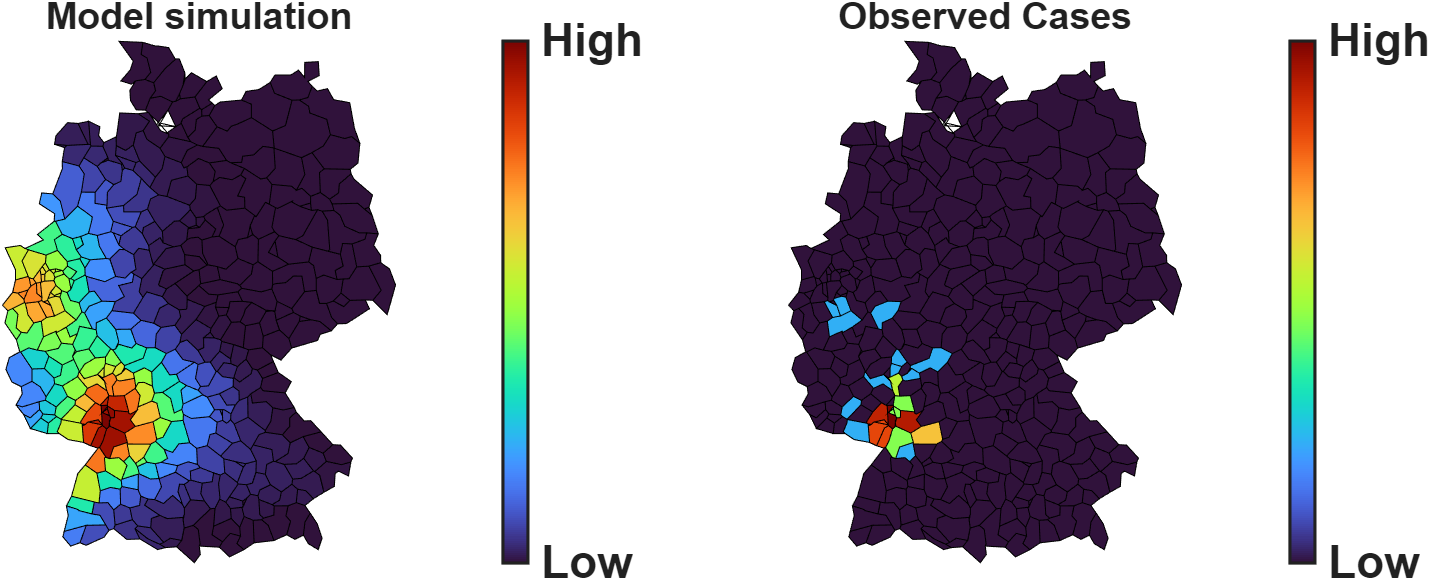}
			\caption{\small 2011-2014.}
			\label{l06}
		\end{subfigure}
		\hfill
		\begin{subfigure}{0.75\textwidth}
			\centering
			\includegraphics[width=\textwidth]{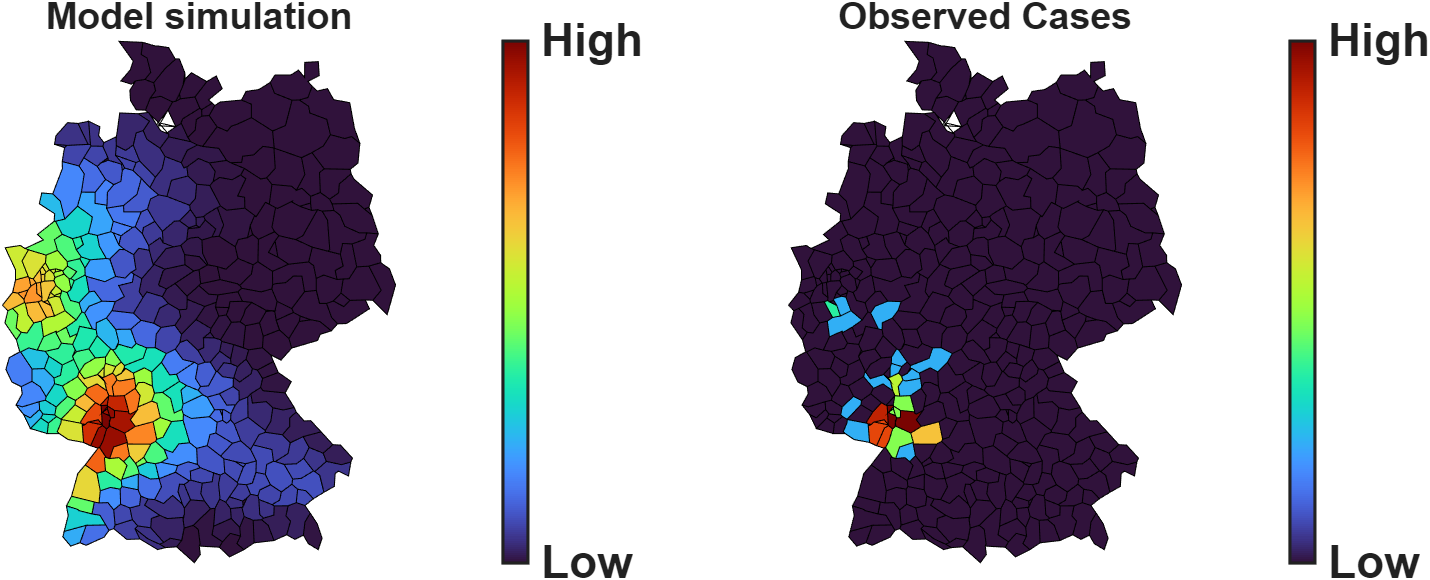}
			\caption{\small 2011-2015.}
			\label{l07}
		\end{subfigure}
		\hfill
		\caption{\small  Spatial comparison between simulated and observed USUV cases in Germany (2012-2018), on a $\log_{10}$ scale.}
		\label{l2}
	\end{figure}

		\begin{figure}[H]
		\centering
				\begin{subfigure}{0.75\textwidth}
			\centering
			\includegraphics[width=\textwidth]{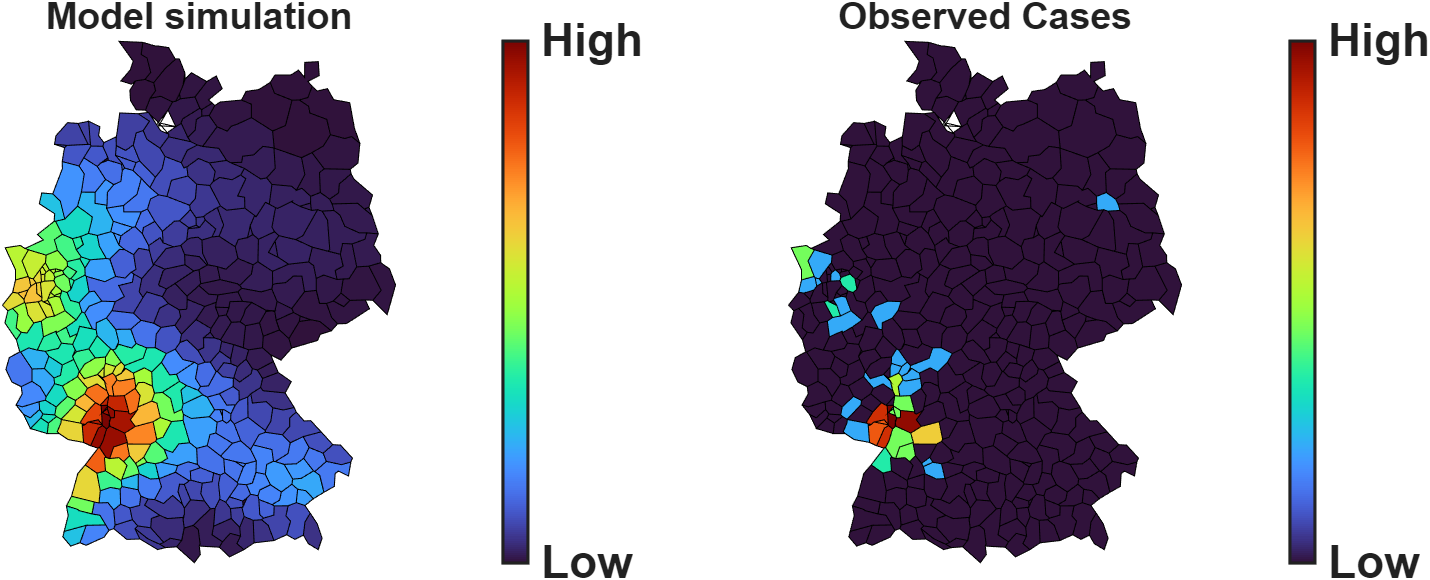}
			\caption{\small 2011-2016.}
			\label{l08}
		\end{subfigure}
		\hfill
		\begin{subfigure}{0.75\textwidth}
			\centering
			\includegraphics[width=\textwidth]{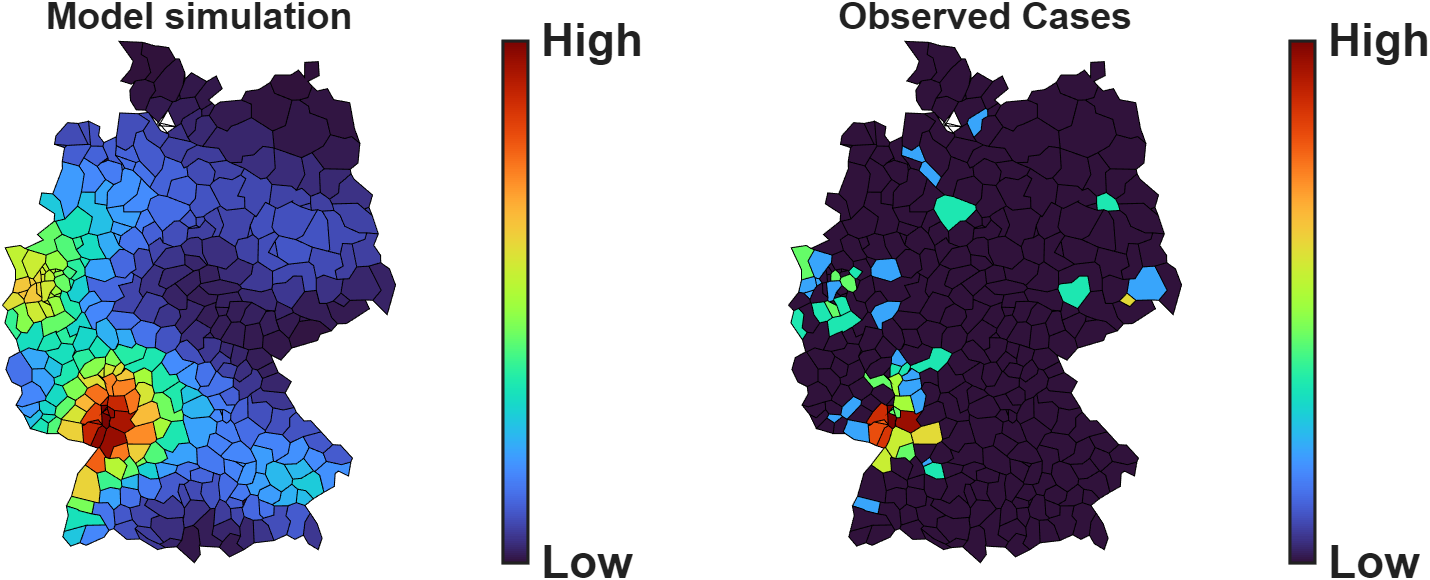}
			\caption{\small 2011-2017.}
			\label{l09}
		\end{subfigure}
					\hfill
		\begin{subfigure}{0.75\textwidth}
			\centering
			\includegraphics[width=\textwidth]{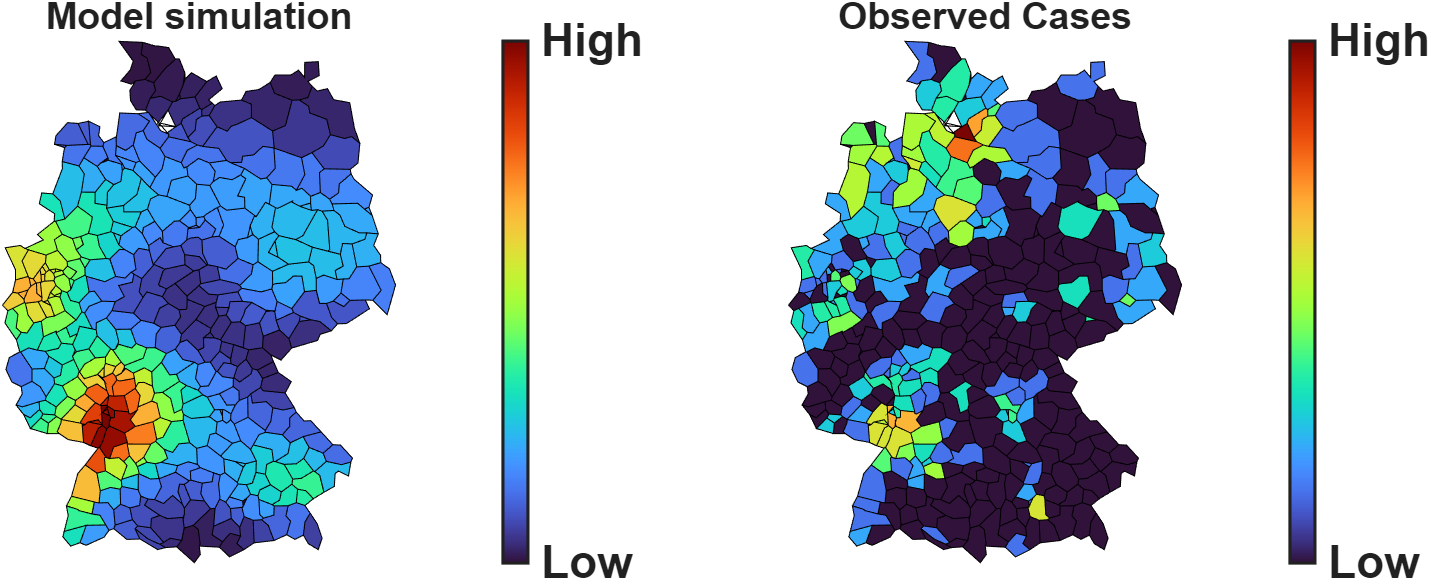}
			\caption{\small 2011-2018.}
			\label{l10}
		\end{subfigure}
		\hfill
		\caption{\small  Spatial comparison between simulated and observed USUV cases in Germany (2012-2018), on a $\log_{10}$ scale.}
		\label{k2}
	\end{figure}

		\begin{figure}[H]
		\centering
		\begin{subfigure}{0.45\textwidth}
			\centering
			\includegraphics[width=\textwidth]{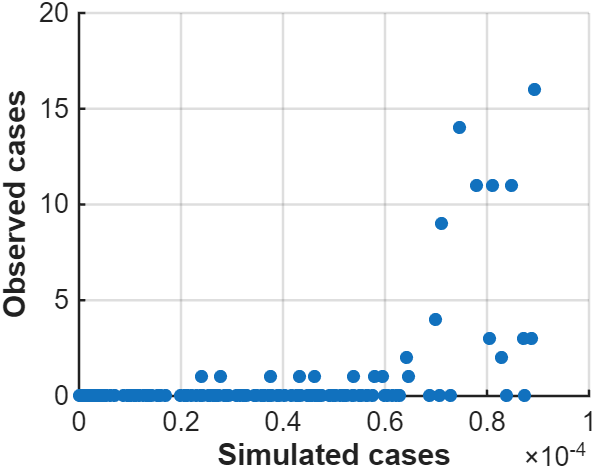}
			\caption{\small 2011-2012.}
			\label{g03}
		\end{subfigure}
		\hfill
		\begin{subfigure}{0.47\textwidth}
			\centering
			\includegraphics[width=\textwidth]{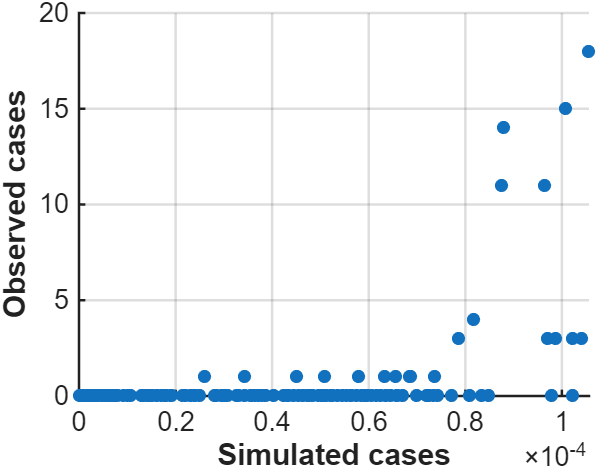}
			\caption{\small 2011-2013.}
			\label{g05}
		\end{subfigure}
		\hfill
		\begin{subfigure}{0.45\textwidth}
			\centering
			\includegraphics[width=\textwidth]{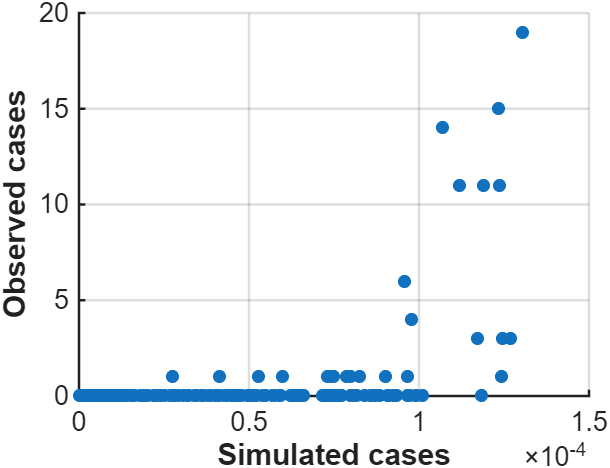}
			\caption{\small 2011-2014.}
			\label{g06}
		\end{subfigure}
		\hfill
		\begin{subfigure}{0.45\textwidth}
			\centering
			\includegraphics[width=\textwidth]{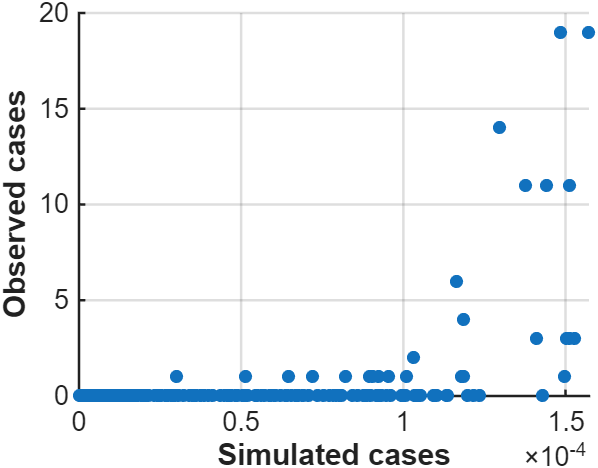}
			\caption{\small 2011-2015.}
			\label{g07}
		\end{subfigure}
		\hfill
		\begin{subfigure}{0.45\textwidth}
			\centering
			\includegraphics[width=\textwidth]{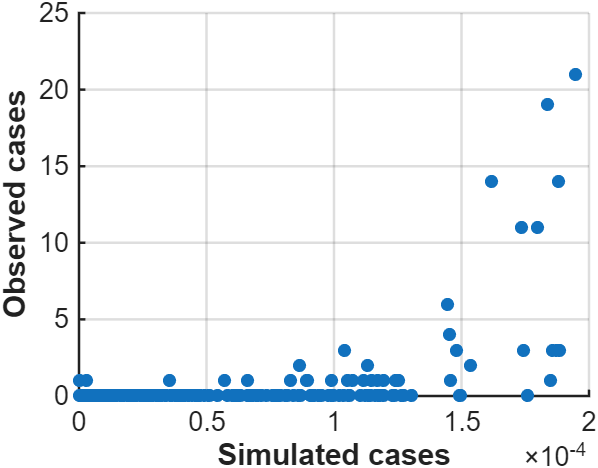}
			\caption{\small 2011-2016.}
			\label{g08}
		\end{subfigure}
		\hfill
		\begin{subfigure}{0.45\textwidth}
			\centering
			\includegraphics[width=\textwidth]{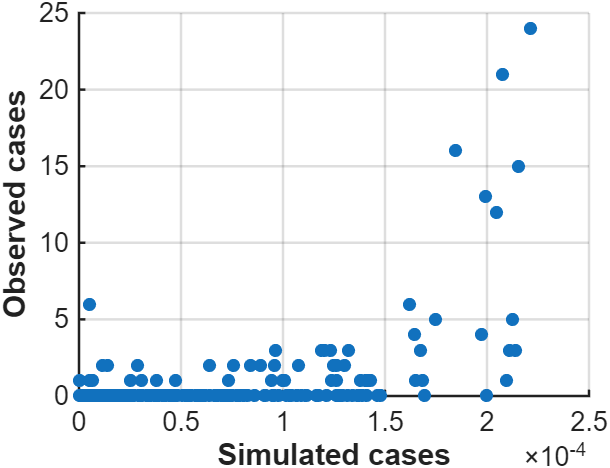}
			\caption{\small 2011-2017.}
			\label{g09}
		\end{subfigure}
		\hfill
		\begin{subfigure}{0.45\textwidth}
			\centering
			\includegraphics[width=\textwidth]{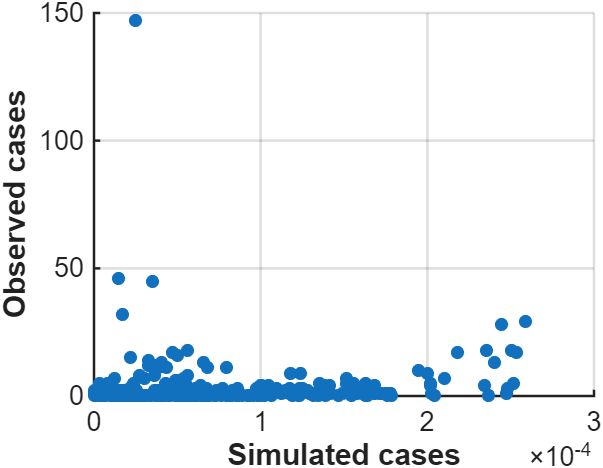}
			\caption{\small 2011-2018.}
			\label{g10}
		\end{subfigure}
		\hfill
		\caption{\small Scatter plots of simulated (x-axis) versus observed (y-axis) case numbers.}
		\label{g2}
	\end{figure}

\section{Discussions}
	
This study provides new insights into the spread of USUV in Germany, taking into account cross-border dynamics with the Netherlands, Belgium, and Luxembourg since its first detection in Germany in 2010. We developed a reaction-diffusion PDE model for mosquitoes and birds, incorporating temperature-dependent mosquito parameters. We proved mathematical properties such as non-negativity, uniqueness of solutions and the well-posedness of the model (Appendix \ref{mt}). The basic reproductive number $R_0(\boldsymbol{x})$ is used to assess low- and high-risk regions, with high $R_0(\boldsymbol{x})$ values indicating heightened risk. $R_0(\boldsymbol{x})$ values ranged between 0.3 and 2.2 (Figure \ref{fb}), and regions with higher values indicate a heightened risk of USUV transmission. Our $R_0(\boldsymbol{x})$ is primarily driven by temperature, thereby providing a spatially explicit transmission risk map for USUV, especially since \textit{Cx. pipiens} are ectothermic insects \citep{Ciota2014, vinogradova2000culex}. The $R_0(\boldsymbol{x})$ map aligned well with observed data and findings from \cite{Cheng2018}, identifying high-risk areas and further supporting the role of temperature in mapping USUV risk. 

Notably, $R_0(\boldsymbol{x})$ identifies areas around central Germany as having a lower risk for USUV circulation, whereas regions in the south-west (around the Rhine valley) are classified as high risk. This is mainly due to the region's favourable temperature for the establishment of USUV. Although other factors, such as landscape structure or reduced anthropogenic breeding habitats, may influence risk, the model's capacity to replicate observed patterns using only temperature suggests that temperature is a key variable, a conclusion supported by \cite{Rubel2008}.

Our PDE model successfully traced the spread pattern of USUV from Weinheim and Bonn, along with their surrounding regions, in 2011, and continued through 2018, when the virus was confirmed in almost all parts of Germany, parts of the Netherlands, and Belgium. USUV spread in a clockwise direction, entirely opposite to the spread pattern observed for WNV \citep{duve2025modeling}, despite both viruses sharing the same vector. Model simulations suggest that the differing spatial spread patterns of the two viruses mainly result from variations in their introduction pathways. Specifically, USUV was first introduced to southwest Germany, while WNV in Germany spread from the east. These distinct entry points influenced the subsequent direction and timing of spread, producing different spatial trajectories despite similar transmission mechanisms.

Although the spreading patterns of WNV and USUV differ, it is notable that both viruses spread along the same transmission corridor which is mainly shaped by regional temperatures (Figure \ref{f2} and \cite{duve2025modeling}). This result is supported by some studies that highlight the shared ecological niches between the two viruses \citep{Simonin2024}. Moreover, the transmission corridor is also in agreement with areas of high USUV suitability identified by \cite{Cheng2018} using a combination of mechanistic and the Maxent models. However, a PDE model further provides detailed information about the spreading patterns and speeds without being limited to suitability only. Because both viruses are now co-circulating in Germany \citep{Schopf2024} and the same temperature-driven corridor, surveillance and control measures can be concentrated in this zone to improve early season detection and interventions. Therefore, interventions aimed at controlling mosquito populations, such as habitat reduction and larval source management, may reduce the transmission of both viruses simultaneously. Consequently, vector control strategies implemented for WNV in Germany are likely to offer parallel benefits in limiting the spread of USUV.

A mosquito to bird ratio ranging between 30 and 50 was chosen as it produced a realistic spread pattern that biologically meaningful diffusion alone could not reproduce. Lower ratios failed to generate sufficient biting pressure to sustain transmission spread across the country, whereas higher ratios lead to unrealistically rapid spread. Given that bird sampling was not systematic across locations, the observed data probably contain sampling biases that might affect the true spatial pattern. This affected our model fit procedure, as we obtained low correlations in the model fit process. Furthermore, our modelling work can be improved by the use of spatially varying diffusion coefficients than will further support the spatial heterogeneity of the landscapes. A systematic method for estimating diffusion coefficients can further improve the uncertainties in the choice of diffusion coefficients.

\section*{CRediT authorship contribution statement}
	
\textbf{Pride Duve:} Conceptualization, Writing - Original draft, Formal analysis, Visualization;

\textbf{D\'aniel Cadar:} Investigation, Resources;

\textbf{Norbert Becker:} Investigation, Resources;

\textbf{Jonas Schmidt-Chanasit:} Investigation, Resources;

\textbf{ Felix Gregor Sauer:} Conceptualization, Supervision, Writing - Review \& Editing; 

\textbf{Renke Lühken:} Conceptualization, Supervision, Writing - Review \& Editing

\section*{Declaration of competing interests}
	
The authors declare that they have no known competing financial interests or personal relationships that could have appeared to influence the work reported in this paper.

\section*{Data availability}

The citizen science project data used in this study will be made freely available in Zenodo upon publication of this study.


	\newpage
	\section{Appendix}
	
	\subsection{Mathematical analysis of the model}\label{mt}
	
	\begin{theorem}
		System (\ref{xa}) has non-negative and unique solutions that are bounded in $\displaystyle [0,\infty).$
	\end{theorem}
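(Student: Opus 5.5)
We establish local existence and uniqueness with the standard semigroup framework for reaction--diffusion systems. We then prove non-negativity through quasi-positivity of the reaction terms. Finally, we obtain uniform bounds on the total populations, which extend the local solution globally.

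\textbf{Local existence and uniqueness.} Set $X=C(\bar\Omega)^7$ for the diffusing variables $(S_V,E_V,I_V,S_B,E_B,I_B,R_B)$ and treat $D_B$ separately, since it only satisfies an ODE driven by $I_B$. Let $A=\mathrm{diag}(D_1\Delta,\dots,D_2\Delta)$ with homogeneous Neumann conditions; it generates an analytic, positive, contractive semigroup on $X$. The temperature-dependent coefficients $\beta,\gamma_V,\mu_V$ from (\ref{e1}), (\ref{e3}), (\ref{e4}) are bounded and continuous on $\bar\Omega\times[0,\infty)$, because $T$ ranges over a bounded set. The reaction vector field $F$ is then locally Lipschitz on the region where $N_V,N_B>0$. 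The only delicate terms are the frequency-dependent forces $I_B/N_B$ and $I_V/N_V$. We handle these by observing that $I_B/N_B\le 1$ and $I_V/N_V\le 1$ whenever the components are non-negative, or by working near the positive initial data, where $N_V,N_B$ are bounded away from zero. Standard theory (Pazy; Martin--Smith) then gives a unique classical mild solution on a maximal interval $[0,T_{\max})$. Moreover, if $T_{\max}<\infty$, then $\|u(t)\|_\infty\to\infty$. The equation for $D_B$ is recovered by direct integration, $D_B(t)=D_{B_0}+\nu_B\alpha_B\int_0^t I_B\,ds$.

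\textbf{Non-negativity.} Check quasi-positivity: each component of $F$ is $\ge 0$ whenever that component vanishes and all the others are non-negative. For example, if $S_V=0$, then $F_1=b_VN_V(1-N_V/K_V)$. This is non-negative as long as $N_V\le K_V$, and the bound in the next step guarantees that. If $E_V=0$, then $F_2=\lambda_{BV}S_V\ge0$, and the same holds for the remaining compartments. To avoid circularity with the logistic term, work on the invariant set $\{0\le N_V\le K_V,\ 0\le N_B\le K_B\}$. Alternatively, replace $b_VN_V(1-N_V/K_V)$ by its positive part in an auxiliary problem and show a posteriori that nothing changes. The invariance principle for parabolic systems (Smoller/Chueh--Conley--Smoller), or the comparison principle applied componentwise, then gives that the solutions remain non-negative. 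It follows that $D_B$ is non-decreasing and non-negative.

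\textbf{Boundedness and globality.} Sum the mosquito equations to get
\[
\partial_t N_V=D_1\Delta N_V+b_VN_V\bigl(1-N_V/K_V\bigr)-\mu_VN_V\le D_1\Delta N_V+b_VN_V\bigl(1-N_V/K_V\bigr),
\]
with Neumann boundary data. Comparing with the spatially homogeneous logistic ODE yields $N_V\le \max\{K_V,\|N_V(0)\|_\infty\}$. For birds, $\partial_tN_B\le D_2\Delta N_B+b_BN_B(1-N_B/K_B)$, since the disease-induced deaths $\nu_B\alpha_BI_B$ leave the system; this gives the analogous bound on $N_B$. Together with non-negativity, every diffusing component is uniformly bounded. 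Hence $T_{\max}=\infty$ by the blow-up alternative. For $D_B$ we only get $D_B(t)\le D_{B_0}+\nu_B\alpha_B M t$, so it is bounded on every finite interval; it becomes uniformly bounded if one shows that $I_B$ is integrable in time, which follows from integrating the $N_B$-equation over $\Omega$. The main obstacle we expect is the singular frequency-dependent incidence $I/N$ near $N=0$ together with its coupling to the logistic recruitment. We plan to resolve this by defining the incidence as $0$ when $N=0$, which keeps it Lipschitz on bounded non-negative sets, and by using the invariant rectangle described above.
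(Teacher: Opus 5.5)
Your overall architecture works (semigroup framework, quasi-positivity, pointwise comparison for the totals), but there are three genuine problems: the singular incidence is not handled, non-negativity needs an extra hypothesis, and the claim about $D_B$ is false.

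\textbf{The singular incidence.} The step that fails is your resolution of what you yourself flag as the main obstacle. Setting the incidence to $0$ at $N=0$ does \emph{not} make it Lipschitz on bounded non-negative sets, because the incidence here is cross-species. The mosquito equations contain $p_v\beta S_VI_B/N_B$: the susceptible factor is a mosquito density, while the denominator is the bird total. So the usual bound $SI/N\le\min(S,I)$, which needs $S$, $I$, $N$ from one population, is unavailable. Fix $S_V=s>0$ and compare the bird states $(S_B,I_B)=(0,\varepsilon)$ and $(\varepsilon,0)$, with all other bird components zero. They are $\varepsilon$ apart, yet the term equals $p_v\beta s$ in the first and $0$ in the second. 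So it is not even continuous at $N_B=0$, and the same holds for $\phi_Bp_b\beta S_BI_V/N_V$.

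Consequently the blow-up alternative you quote, $T_{\max}<\infty\Rightarrow\|u(t)\|_\infty\to\infty$, does not apply. The nonlinearity is only locally Lipschitz on $\{N_V>0,\ N_B>0\}$, and the correct alternative also allows $\min_{\bar\Omega}N_V(t)\to0$ or $\min_{\bar\Omega}N_B(t)\to0$. Your other option (initial totals bounded away from zero) is the right one, but it needs a step you have not supplied. Once non-negativity gives $I_B\le N_B$ and your comparison gives $N_B\le M$, you have $\partial_tN_B\ge D_2\Delta N_B-(\mu_B+\nu_B\alpha_B+b_BM/K_B)N_B$. Hence $\min_{\bar\Omega}N_B(t)\ge e^{-ct}\min_{\bar\Omega}N_B(0)>0$, and the closed logistic equation for $N_V$ gives the analogous bound. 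Only the upper bounds together with these lower bounds exclude $T_{\max}<\infty$. The paper's Lipschitz estimate, with constant $\mu_V+\nu_B\alpha_B$, ignores this singularity as well.

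\textbf{Non-negativity needs a hypothesis.} Your remark that the next step guarantees $N_V\le K_V$ is wrong: that step only gives $N_V\le\max\{K_V,\|N_V(0)\|_\infty\}$. Restricting to $\{N_V\le K_V,\ N_B\le K_B\}$ is not a convenience. For homogeneous data with $S_V(0)=0$ and $E_V(0)>K_V$, you get $\partial_tS_V(0)=b_VN_V(0)\bigl(1-N_V(0)/K_V\bigr)<0$. So $S_V$ turns negative, and the statement is false for general non-negative data. The paper's one-line appeal to Smoller's invariant-region theorem does not check this either. You must state $N_V(0)\le K_V$ and $N_B(0)\le K_B$ as hypotheses. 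Then your invariant-region argument works: $N_V$ solves a closed scalar equation with constant supersolution $K_V$, and on $\{N_B=K_B\}$ the reaction part of the $N_B$ equation is $-\mu_BK_B-\nu_B\alpha_BI_B\le0$.

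\textbf{$D_B$ is not bounded.} Integrating the $N_B$ equation does not make $I_B$ time-integrable. It gives $\nu_B\alpha_B\int_0^t\!\int_\Omega I_B=\int_\Omega N_B(0)-\int_\Omega N_B(t)+\int_0^t\!\int_\Omega\bigl[b_BN_B(1-N_B/K_B)-\mu_BN_B\bigr]$. The last term grows linearly in any endemic regime: at an endemic steady state its $\Omega$-integral is $\nu_B\alpha_B\int_\Omega I_B^*>0$ for all $t$. So $D_B(t)\to\infty$. Uniform boundedness on $[0,\infty)$ holds only for the seven diffusing components, which is all the paper's proof treats (its state vector omits $D_B$).

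With these repairs your route is sounder than the paper's. The paper's proof of this theorem gives no bound at all, and its later well-posedness theorem obtains only an exponentially growing $L^1$ bound before invoking a cited $L^1$-to-$L^\infty$ lemma. Your pointwise comparison with the logistic ODE gives the uniform $L^\infty$ bound directly.
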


	\begin{proof}
		
		We write system (\ref{xa}) together with the corresponding initial conditions, in the Banach space of continuous functions that include the boundary, $\displaystyle \mathcal{B}=C(\bar{\Omega}),$ as follows:
		

		\begin{equation*}\label{er}
			\begin{cases}
				\displaystyle \frac{\partial X(t,\boldsymbol{x})}{\partial t} =&\displaystyle \mathcal{L}X(t,\boldsymbol{x}) + f(X(t,\boldsymbol{x})),\; t>0,\\
				\displaystyle X(t,0) =&\displaystyle X_0\geq \boldmath{0}_{\mathbb{R}^{7}},
		\end{cases}\end{equation*}
		
		where 
		
		$$X(t,\boldsymbol{x})=\begin{bmatrix}
			S_V(t,\boldsymbol{x})\\
			E_V(t,\boldsymbol{x})\\
			I_V(t,\boldsymbol{x})\\
			S_{B}(t,\boldsymbol{x})\\
			E_{B}(t,\boldsymbol{x})\\
			I_{B}(t,\boldsymbol{x})\\
			R_{B}(t,\boldsymbol{x})\\
		\end{bmatrix},\; X(0,\boldsymbol{x})=\begin{bmatrix}
			S_V(0,\boldsymbol{x})\\
			E_V(0,\boldsymbol{x})\\
			I_V(0,\boldsymbol{x})\\
			S_{B}(0,\boldsymbol{x})\\
			E_{B}(0,\boldsymbol{x})\\
			I_{B}(0,\boldsymbol{x})\\
			R_{B}(0,\boldsymbol{x})\\
		\end{bmatrix}, \; \mathcal{L}X(t,\boldsymbol{x}) = \begin{bmatrix}
			D_{1}\Delta S_V(t,\boldsymbol{x})\\
			D_{1}\Delta E_V(t,\boldsymbol{x})\\
			D_{1}\Delta I_V(t,\boldsymbol{x})\\
			D_{2}\Delta S_{B}(t,\boldsymbol{x})\\
			D_{2}\Delta E_{B}(t,\boldsymbol{x})\\
			D_{2}\Delta I_{B}(t,\boldsymbol{x})\\
			D_{2}\Delta R_{B}(t,\boldsymbol{x})\\
		\end{bmatrix},$$
		
		and	
		
		\begin{eqnarray} \label{g1} 
			f =	\begin{cases}
				f_1=& \dis b_VN_V\left[1-\frac{N_V}{K_V}\right] -[\lambda_{BV}(T)+\mu_V(T)]S_V,\nonumber\\			
				f_2=& \lambda_{BV}(T)S_V-(\gamma_V(T)+\mu_V(T))E_V,\nonumber\\
				f_3=& \gamma_V(T)E_V-\mu_V(T)I_V,\nonumber\\
				f_4=& \dis b_{B}N_{B}\left[1-\frac{N_B}{K_B}\right]-\left[\lambda_{VB}(T)+\mu_{B}\right]S_{B},\nonumber\\
				f_5=& \lambda_{VB}(T)S_{B}-[\gamma_{B}+\mu_{B}]E_{B},\nonumber\\
				f_6=& \gamma_{B}E_{B}-[\alpha_{B}+\mu_{B}]I_{B},\nonumber\\
				f_7=& (1-\nu_{B})\alpha_{B} I_{B}-\mu_{B}R_{B}.\nonumber\\
			\end{cases}
		\end{eqnarray}

		Next, we show that $f$ is locally Lipschitz in $\mathcal{B}.$ Thus, we show that 
		$$\forall\; K\subset \mathcal{B},\; \exists L:\quad \|f(X_1)-f(X_2)\|_{\infty}\leq L\|X_1-X_2\|_{\infty},\;\forall X_1,X_2\in K,$$
		where $L$ is the Lipschitz constant. Setting $\dis \Lambda_{V}=b_VN_V\left[1-\frac{N_V}{K_V}\right]$ and $\dis \Lambda_{B}=b_BN_B\left[1-\frac{N_B}{K_B}\right]$, we observe that 
		$\displaystyle X_1-X_2=$
		$$
		\begin{bmatrix}
			\left[\Lambda_{V_1}-\Lambda_{V_2}\right]-\left[\lambda_{BV_1}S_{V_1}-\lambda_{BV_2}S_{V_2}\right]-\mu_V(S_{V_1}-S_{V_2})\\
			\left[\lambda_{BV_1}S_{V_1}-\lambda_{BV_2}S_{V_2}\right]-(\gamma_V+\mu_V)[E_{V_1}-E_{V_2}]\\
			\gamma_V[E_{V_1}-E_{V_2}]-\mu_V[I_{V_1}-I_{V_2}]\\
			[\Lambda_{B_1}-\Lambda_{B_2}]-(\lambda_{VB_1}S_{B_1}-\lambda_{VB_2}S_{B_2})-\mu_{B}[S_{B_1}-S_{B_2}]\\
			\lambda_{VB_1}S_{B_1}-\lambda_{VB_2}S_{B_2}-(\gamma_{B}+\mu_{B})[E_{B_1}-E_{B_2}]\\
			\gamma_{B}[E_{B_1}-E_{B_2}] - (\alpha_{B}+\mu_{B})[I_{B_1}-I_{B_2}]\\
			(1-\nu_{B})\alpha_{B}[I_{B_1}-I_{B_2}]-\mu_{B}[R_{B_1}-R_{B_2}]\\
		\end{bmatrix}.$$

		Simplifying, we obtain $\displaystyle \|f(X_1)-f(X_2)\|_{\infty}=$
		
		\begin{eqnarray*}
			&=& \displaystyle\sup_{ x\in\bar{\Omega}}|\left[\Lambda_{V_1}-\Lambda_{V_2}\right]-S_{V_1}[\lambda_{BV_1}-\lambda_{BV_2}]|\nonumber\\
			&\vee&\displaystyle\sup_{ x\in\bar{\Omega}}|-\lambda_{BV_2}[S_{V_1}-S_{V_2}]-\mu_V(S_{V_1}-S_{V_2})|\nonumber\\
			&\vee&\displaystyle\sup_{ x\in\bar{\Omega}} |\gamma_V[E_{V_1}-E_{V_2}]-\mu_V[I_{V_1}-I_{V_2}]|  \nonumber\\
			&\vee&\displaystyle\sup_{ x\in\bar{\Omega}} |[\Lambda_{B_1}-\Lambda_{B_2}]-\lambda_{VB_1}[S_{B_1}-S_{B_2}]-S_{B_2}[\lambda_{VB_1}-\lambda_{VB_2}]-\mu_{B}[S_{B_1}-S_{B_2}]|  \nonumber\\
			&\vee&\displaystyle\sup_{ x\in\bar{\Omega}} |\lambda_{VB_1}[S_{B_1}-S_{B_2}]+S_{B_2}[\lambda_{VB_1}-\lambda_{VB_2}]-(\gamma_{B}+\mu_{B})[E_{B_1}-E_{B_2}]| \nonumber\\
			&\vee&\displaystyle\sup_{ x\in\bar{\Omega}} |\gamma_{B}[E_{B_1}-E_{B_2}] - (\alpha_{B}+\mu_{B})[I_{B_1}-I_{B_2}]| \nonumber\\
			&\vee&\displaystyle\sup_{ x\in\bar{\Omega}} |(1-\nu_{B})\alpha_{B}[I_{B_1}-I_{B_2}]-\mu_{B}[R_{B_1}-R_{B_2}]|.\nonumber\\
		\end{eqnarray*}

		By the triangular inequality, and some simplifications, we arrive at
		
		$$\|f(X_1)-f(X_2)\|_{\infty}\leq \left(\mu_V+\nu_{B}\alpha_{B}\right)\|X_1-X_2\|_{\infty},$$
		
		and thus $f$ is locally Lipschitz in $\mathcal{B}.$ By [\citep{Capasso1993}, Theorem B.17], \citep{Mora1983}, and [\citep{Smoller1983}, Theorem 14.4], there exist a local smooth and unique solution of system (\ref{xa}) in $\displaystyle \Omega.$ We observe that system (\ref{xa}) can be written in the form of system (14.12) in the book \citep{Smoller1983}, together with initial data defined in system (14.13). By [\citep{Smoller1983}, Theorem (14.14)], the solutions of system (\ref{xa}) are always positive.	
	\end{proof}

	\subsection{Well-posedness of the model in a feasible region}
	
	\begin{theorem}
		system (\ref{xa}) is well-posed mathematically and biologically, and for any 
		
		$\dis \left(S_V(0,x),E_V(0,x),I_V(0,x),S_{B}(0,x), E_{B}(0,x), I_{B}(0,x), R_{B}(0,x)\right)\in \mathbb{X},$ system (\ref{xa}) admits a unique positive solution: $\dis \left(S_V(t,x),E_V(t,x),I_V(t,x),S_{B}(t,x),E_{B}(t,x), I_{B}(t,x), R_{B}(t,x)\right)\in \mathbb{X},$ satisfying:
		
		$\dis (S_V(t,x),E_V(t,x),I_V(t,x),S_{B}(t,x), E_{B}(t,x), I_{B}(t,x), R_{B}(t,x) \in C^{1,2}((0,\infty)\times\bar{\Omega}) \times C^{1,2}((0,\infty)\times\bar{\Omega}),\;\text{where}\; \mathbb{X}:=C(\bar{\Omega})\times C(\bar{\Omega}).$
		
		Moreover, there exist another constant $C_1>0$ independent of initial data, such that the solution $\dis (S_V(t,x),E_V(t,x),I_V(t,x),S_{B}(t,x), E_{B}(t,x), I_{B}(t,x), R_{B}(t,x)(t,x))$ satisfies:
		
		$\|S_V(t,x)\|_{L^{\infty}(\Omega)}+\|E_V(t,x)\|_{L^{\infty}(\Omega)} +\|I_V(t,x)\|_{L^{\infty}(\Omega)} +\|S_{B}(t,x)\|_{L^{\infty}(\Omega)} +\|E_{B}(t,x)\|_{L^{\infty}(\Omega)}+ \|I_{B}(t,x)\|_{L^{\infty}(\Omega)}+\|R_{B}(t,x)\|_{L^{\infty}(\Omega)} \leq C_1,\;\text{for all}\; t>T_0>0.$
	\end{theorem}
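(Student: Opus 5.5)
We argue in three stages: local existence and positivity, a priori bounds for the total populations, and then bounds for each compartment with global continuation.

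\textbf{Local existence, regularity and positivity.} These come from the previous theorem. System (\ref{xa}) is written as the abstract Cauchy problem $X_t=\mathcal{L}X+f(X)$ on $C(\bar\Omega)^7$. Under homogeneous Neumann conditions, $\mathcal{L}$ generates an analytic semigroup, and $f$ is locally Lipschitz. This gives a unique mild solution on a maximal interval $[0,T_{\max})$. Standard parabolic regularity (Schauder estimates, using that $\beta,\gamma_V,\mu_V$ are smooth and bounded in $\boldsymbol{x}$) upgrades it to a classical solution in $C^{1,2}((0,T_{\max})\times\bar\Omega)$.

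Nonnegativity follows from quasi-positivity: each $f_i$ is $\ge 0$ whenever its own variable vanishes and the others are nonnegative. To handle the fractions $I_B/N_B$ and $I_V/N_V$, we read them as $0$ when the denominator is $0$. Because $0\le I_B/N_B\le1$, these terms stay bounded. The blow-up alternative says that if $T_{\max}<\infty$, then $\|X(t)\|_\infty\to\infty$. So global existence reduces to an $L^\infty$ bound.

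\textbf{Bounds on the totals.} Summing the mosquito equations gives
$$\partial_t N_V = D_1\Delta N_V + b_VN_V\Bigl(1-\tfrac{N_V}{K_V}\Bigr)-\mu_V(T,\boldsymbol{x})N_V \le D_1\Delta N_V + b_VN_V\Bigl(1-\tfrac{N_V}{K_V}\Bigr),$$
with Neumann boundary conditions. Summing the four bird equations in the same way gives
$$\partial_t N_B \le D_2\Delta N_B + b_BN_B\Bigl(1-\tfrac{N_B}{K_B}\Bigr).$$
Compare each with the solution of the logistic ODE started at $\|N(0)\|_\infty$; a spatially constant function is an admissible Neumann supersolution. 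This yields
$$N_V(t,\boldsymbol{x})\le \max\{K_V,\|N_V(0)\|_\infty\},$$
and moreover $\limsup_{t\to\infty}\|N_V(t)\|_\infty\le K_V$. The same holds for $N_B$ with $K_B$. Taking $K_V,K_B$ constant (uniform bounds suffice if they vary in space), we get: for any $\varepsilon>0$ there is $T_0$, depending on the initial data, with $N_V\le K_V+\varepsilon$ and $N_B\le K_B+\varepsilon$ for $t>T_0$.

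\textbf{Component bounds and the main obstacle.} Each component is nonnegative and bounded by its total, so $\sum_i\|X_i\|_\infty\le 3\|N_V\|_\infty+4\|N_B\|_\infty$. This rules out blow-up, so $T_{\max}=\infty$, and gives the eventual bound with $C_1=3(K_V+1)+4(K_B+1)$, which is independent of the data. Since the solution remains in $\mathbb{X}$ and depends continuously on the data (by the semigroup/Gronwall estimate from the Lipschitz property), the problem is well posed. The dead-bird compartment $D_B$ is recovered by integrating $\nu_B\alpha_BI_B$ in time; it is bounded on finite intervals but not uniformly, which is why the statement excludes it.

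The main technical obstacle is the singular incidence terms $I_B/N_B$ and $I_V/N_V$ where $N$ could vanish. We handle this in one of two ways. Either we show $N_V,N_B>0$ for $t>0$ by the strong maximum principle, assuming nontrivial initial totals and $b_V>\mu_V$, $b_B>\mu_B$. Or we regularize $I/N$ to $I/(N+\delta)$, obtain bounds uniform in $\delta$, and pass to the limit. I would try the strong maximum principle first, since $N_V$ satisfies a scalar logistic-type equation with bounded coefficients.
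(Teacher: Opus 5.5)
\textbf{The gap is your quasi-positivity claim.} It holds for $f_2,f_3,f_5,f_6,f_7$, but not for the two susceptible equations.

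At $S_V=0$ one has $f_1=b_VN_V(1-N_V/K_V)$ with $N_V=E_V+I_V$. This is negative whenever $E_V+I_V>K_V$. Likewise, $f_4=b_BN_B(1-N_B/K_B)<0$ at $S_B=0$ whenever $E_B+I_B+R_B>K_B$. Above carrying capacity, the logistic term removes susceptibles instead of recruiting them.

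This is not a technicality. Take a constant $K_V>0$, data $S_V(0)\equiv0$, $E_V(0)\equiv 2K_V$, $I_V(0)\equiv 0$, and any bird data with $N_B(0)>0$. Then $\partial_t S_V(0,\cdot)=D_1\Delta S_V(0,\cdot)+f_1=-2b_VK_V<0$ on $\bar\Omega$, so $S_V$ is negative for small $t>0$.

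So positivity fails for some nonnegative data. With it go your inequality $\sum_i\|X_i\|_\infty\le 3\|N_V\|_\infty+4\|N_B\|_\infty$ and the data-independent $C_1$. Your own bound $N_V\le\max\{K_V,\|N_V(0)\|_\infty\}$ explicitly allows exactly this regime. The paper's choice $K_V=S_V(0,\cdot)$ with positive seeded $E_V(0),I_V(0)$ also puts its simulated data in it. The paper's proof does not resolve this either; it simply invokes the strong maximum principle.

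\textbf{The repair.} Restrict to data with $X(0)\ge 0$, $N_V(0)\le K_V$, $N_B(0)\le K_B$, and propagate these constraints together with nonnegativity.

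For mosquitoes, the total satisfies the \emph{closed} equation $\partial_t N_V=D_1\Delta N_V+b_VN_V(1-N_V/K_V)-\mu_VN_V$. Comparison with the constants $0$ and $K_V$ then gives $0\le N_V\le K_V$ without any sign information on the components. After that, $f_1\ge 0$ at $S_V=0$.

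For birds, the total equation contains $-\nu_B\alpha_B I_B$. So $N_B\le K_B$ needs $I_B\ge 0$, while $S_B\ge 0$ needs $N_B\le K_B$. You can break this circularity in either of two ways:
\begin{enumerate}
\item Use an invariant-region argument for $\Sigma=\{X\ge 0,\ N_V\le K_V,\ N_B\le K_B\}$. The normals $e_i$, $(1,1,1,0,0,0,0)$ and $(0,0,0,1,1,1,1)$ are left eigenvectors of the diagonal diffusion matrix. The summed reactions on $\{N_V=K_V\}$ and $\{N_B=K_B\}$ are $-\mu_VK_V\le 0$ and $-\mu_BK_B-\nu_B\alpha_BI_B\le 0$.
\item Truncate the logistic term and remove the truncation a posteriori.
\end{enumerate}

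\textbf{Comparison with the paper's route.} Once positivity is fixed, the rest of your argument works, and it is genuinely different from the paper's. The paper integrates the $N_B$ equation over $\Omega$ to get an $L^1$ bound, which grows like $e^{t b_{B,\max}}$. It then appeals to the $L^1$-to-$L^\infty$ lemma it cites from Peng (2012) for the data-independent bound.

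Your pointwise comparison with a spatially constant logistic supersolution is more elementary and gives explicit constants. It also rules out blow-up on $[0,T_{\max})$ directly. It works because each species' total obeys a scalar logistic inequality with a common diffusion coefficient. The paper's integral route is the one that would survive if only integral bounds were available.
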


	\begin{proof}
		
		We follow the approach presented in \citep{Peng2012, Wang2022}. Using the regularity theory of parabolic PDEs \citep{Pao1993}, system (6) admits a unique non-negative classical solution 
		
		$$\dis (S_B(t,x), E_B(t,x), I_B(t,x), R_B(t,x)) \in C^{1,2}((0,T_m)\times\bar{\Omega}) \times C^{1,2}((0,T_m)\times\bar{\Omega}),$$
		
		where $T_m$ represents the maximal existence time of the solution. By the strong maximum principle \citep{Protter1984}, then $\dis S_B(t,x), E_B(t,x), I_B(t,x), R_B(t,x)$ are positive in $\dis (0,T_m)\times\bar{\Omega}.$
		
		Summing up the equations gives 
		
		$$\frac{\partial N_B}{\partial t}=\mathcal{D}_B\Delta N_{B} +b_{B}N_B\left[1-\frac{N_B}{K_B}\right]-\mu_BN_B-\alpha_B\nu_BI_B,$$
		
		subject to homogeneous Neumann boundary conditions $\dis \nabla N_B\cdot n=0$ on $\partial \Omega$ and non-negative initial data $N_B(x,0)=N_B(x)\geq 0.$ Integrating over $\dis \Omega,$ we get

		$$\displaystyle \frac{d}{dt}\int_{\Omega} N_B(x,t)\;dx= \int_{\Omega} \left[b_B-\mu_B\right]N_B\;dx-\int_{\Omega} b_B\frac{N_B}{K_B}\;dx \leq \bar{r}\int_{\Omega} N_B(x,t)\;dx,\; x\in\Omega,\; t> 0, \bar{r}\leq b_{B,max},$$

		yielding $$\int_{\Omega} N_B(x,t)\;dx \leq \int_{\Omega} e^{tb_{B,max}}N_{B0}(x)\;dx,\quad x\in \Omega,\; t\geq 0.$$
		
		Thus, $ \dis \|S_B(t,\cdot)\|_{L^1(\Omega)},$ $\dis \|E_B(t,\cdot)\|_{L^1(\Omega)},$ $\dis \|I_B(t,\cdot)\|_{L^1(\Omega)},$ and $\dis \|R_B(t,\cdot)\|_{L^1(\Omega)}$ are bounded for all $\dis 0<t<T_m.$ By the positivity of $S_B(t,\cdot),$ $E_B(t,\cdot),$ $I_B(t,\cdot),$ $R_B(t,\cdot),$ and [\citep{Peng2012}, Lemma (3.1)], with $\sigma=p_0=1,$ we conclude that there exist a positive constant $C_1$ that does not depend on initial data such that the solution $S_B, E_B, I_B, R_B$ satisfies
		$$\|S_B(t,\cdot)\|_{L^{\infty}(\Omega)} +  \|E_B(t,\cdot)\|_{L^{\infty}(\Omega)}+  \|I_B(t,\cdot)\|_{L^{\infty}(\Omega)} +  \|R_B(t,\cdot)\|_{L^{\infty}(\Omega)}\leq C_1,\;\forall t>T_0.$$
		
		Similar arguments can be made for the mosquito population, thus we conclude that system (6) is well-posed.
	\end{proof}

	\subsection{Matrices used in the PDEToolbox}\label{mat}

	$$c=[D_1; D_1; D_1; D_2; D_2; D_2; D_2; 0],$$ while 
	
	\begin{equation}\label{f}   
		f = \begin{cases}
			\dis  b_VN_V\left[1-\frac{N_V}{K_V}\right] \quad-\quad [\lambda_{BV}(T,\boldsymbol{x})+\mu_V(T,\boldsymbol{x})]S_V,\\	
			\lambda_{BV}(T,\boldsymbol{x})S_V\quad-\quad\left[\gamma_V(T,\boldsymbol{x})+\mu_V(T,\boldsymbol{x})\right]E_V,\\	
			\gamma_V(T,\boldsymbol{x})E_V\quad-\quad\mu_V(T,\boldsymbol{x})I_V,\\
			\dis b_{B}N_{B}\left[1-\frac{N_B}{K_B}\right]\quad-\quad\left[\lambda_{VB}(T,\boldsymbol{x})+\mu_{B}\right]S_{B},\\
			\lambda_{VB}(T,\boldsymbol{x})S_{B}\quad-\quad[\gamma_{B}+\mu_{B}]E_{B},\\
			\gamma_{B}E_{B}\quad-\quad[\alpha_{B}+\mu_{B}]I_{B},\\
			(1-\nu_{B})\alpha_{B} I_{B}\quad-\quad\mu_{B}R_{B},\\
			\alpha_{B}\nu_{B} I_{B}.\end{cases}
	\end{equation}

	\bibliographystyle{plainnat}  
	
	\newpage
	\bibliography{References}      
	
\end{document}